\newtheorem{assumption}{Assumption}
\newtheorem{proposition}{Proposition}
\newtheorem{proof}{Proof}
\newcommand\highertop{\rule{0pt}{3.1ex}}
\begin{document}
\date{}

\title{Automatic crosswind flight of tethered wings\\ for airborne wind energy:\\ modeling, control design and experimental results
\thanks{This manuscript is a preprint of a paper submitted for possible publication on the IEEE Transactions on Control Systems Technology and is subject to IEEE Copyright. If accepted, the copy of
record will be available at \textbf{IEEEXplore} library: http://ieeexplore.ieee.org/.}
\thanks{This research has received funding from the California Energy Commission under the EISG grant n. 56983A/10-15 ``Autonomous flexible wings for high-altitude wind energy generation'', and from the European Union Seventh Framework Programme (FP7/2007-2013) under grant agreement n. PIOF-GA-2009-252284 - Marie Curie project ``Innovative Control, Identification and Estimation Methodologies for Sustainable Energy Technologies''. The authors acknowledge SpeedGoat$^\circledR$'s Greengoat program.}}
\author{L. Fagiano\thanks{Corresponding author: fagiano@control.ee.ethz.ch.},
 A. U. Zgraggen,
 M. Morari and M. Khammash
\thanks{L. Fagiano, A. Zgraggen and M. Morari are with the Automatic Control Laboratory, Swiss Federal Institute of Technology, Zurich, Switzerland. M. Khammash is with the Department of Biosystems Science and Engineering, ETH Zurich, Switzerland. L. Fagiano and M. Khammash are also with the Dept. of Mechanical Engineering, University of California at Santa Barbara, CA, USA.}}
\maketitle

\begin{abstract}
An approach to control tethered wings for airborne wind energy is proposed. A fixed length of the lines is considered, and the aim of the control system is to obtain figure-eight crosswind trajectories. The proposed technique is based on the notion of the  wing's ``velocity angle'' and, in contrast with most existing approaches, it does not require a measurement of the wind speed or of the effective wind at the wing's location. Moreover, the proposed approach features  few  parameters, whose effects on the system's behavior are very intuitive, hence simplifying tuning procedures. A simplified model of the steering dynamics of the wing is derived from first-principle laws, compared with experimental data and used for the control design. The control algorithm is divided into a low-level loop for the velocity angle and a high-level guidance strategy to achieve the desired flight patterns. The robustness of the inner loop is verified analytically,  and the overall control system is tested experimentally on a small-scale prototype, with varying wind conditions and using different wings.

\end{abstract}

\section{Introduction}\label{S:intro}

Airborne wind energy systems aim at harnessing the wind blowing up to 1000 m above the ground, using tethered wings or aircrafts. In recent years, an increasing number of researchers in academia and industry started to investigate this idea and to develop concepts of airborne wind energy generators, see e.g. \cite{Makani,skysails,ampyx,windlift,kitenergy,enerkite,CaFM07,IlHD07,CaFM09c,TBSO11,BaOc12,VeGK12} as well as \cite{FaMi12} for an overview.\\ 
In several concepts of airborne wind energy generators that are currently being developed, a tethered flexible wing is controlled to fly fast in crosswind conditions, i.e. roughly perpendicular to the wind flow \cite{Loyd80}, and the traction forces acting on the lines are converted into electricity using mechanical and electrical equipments installed on the ground \cite{skysails,kitenergy,enerkite,BaOc12}. In particular, on the ground the wing's lines are wound around one or more winches, linked to electric generators. Energy is obtained by continuously performing a two-phase cycle, composed by a \emph{traction phase}, during which the lines are unrolled under high traction forces and the generators, driven by the rotation of the winches, produce electricity, and by a subsequent \emph{passive phase}, when the electric generators act as motors, spending a fraction of the previously generated energy to recoil the lines.\\
The automatic control of the wing is a key aspect of airborne wind energy. In traction phases, the aim is to make the wing fly along figure-eight paths, which yield the highest traction forces while preventing line twisting. This control problem involves fast, nonlinear, unstable time-varying dynamics subject to hard operational constraints and external disturbances. Several contributions by various research groups and companies worldwide have recently appeared in the literature, see e.g. \cite{IlHD07,WiLO08,CaFM09c,FaMP11,ErSt12,BaOc12}. Most of the presented approaches \cite{CaFM07,IlHD07,WiLO08,CaFM09c,phd_thesis_fagiano} are based on a nonlinear point-mass model of the system, derived on the basis of first-principle laws of mechanics and aerodynamics, and they rely on the use of advanced nonlinear control design techniques. Several proposed techniques employ a reference trajectory derived off-line on the basis of the considered model, usually computed in order to  be optimal in terms of generated power \cite{IlHD07,WiLO08,BaOc12}. Then, approaches like tracking Model Predictive Control (MPC) \cite{IlHD07} or adaptive control-Lyapunov techniques \cite{BaOc12} are used to track the reference path. The use of MPC with an economic cost function, i.e. without the need to pre-compute a reference path, has been also proposed, see e.g. \cite{CaFM09c},
again exploiting a point mass model. While the mentioned contributions represent fascinating applications of constrained and optimal nonlinear control methods, their use in a real system appears to be not trivial, due to the discrepancies between the employed simplified model and the real dynamics of flexible wings, the need to solve complex nonlinear optimization problems in real-time, finally the need to measure the wind speed and direction at the wing's flying altitude.\\ In contrast with the mentioned works, in a recent contribution, concerned with the control of large kites for seagoing vessels \cite{ErSt12}, a simpler dynamical model has been proposed and used for control design. Such a model has been justified by means of measured data, and the designed control system has been tested experimentally, thus showing the practical applicability of the approach. It has to be noted that a similar model has been considered also in \cite{BaOc12}, where it has been justified by a priori assumptions.
Hence, neither \cite{BaOc12} nor \cite{ErSt12} provided an explicit link between the model considered in the control design and the wing's characteristics, like area, efficiency and mass.  The control approach proposed in \cite{ErSt12} is composed by an inner control loop that computes the  input needed to obtain a desired reference heading of the wing, and an outer control loop that computes the reference heading according to a bang-bang-like strategy. The inner control loop is a quite sophisticated model following approach and it needs the measure of the effective wind speed at the wing's altitude, obtained from an onboard anemometer.\\
In the described context, we present here new contributions in the field of control of flexible tethered wings for airborne wind energy systems with ground-level generators. We focus on the problem of controlling the wing in order to fly along figure-eight paths in crosswind conditions. First, we consider a simplified model based on the notion of ``velocity angle'' of the wing, similar in form to the one proposed by \cite{BaOc12} and \cite{ErSt12}, and we derive an explicit link between the model's parameters and the system's characteristics. We show the validity of such a model as compared to experimental data collected with a small-scale prototype.
This result provides a definitive assessment of the considered control-oriented model for tethered wings, thus bridging the gap between theoretical equations and experimental evidence.
As a second contribution,  we present a new control algorithm for tethered wings, based on the derived simplified model. Differently from \cite{IlHD07,WiLO08,BaOc12}, the approach does not employ pre-computed paths based on a mathematical model of the system, thus avoiding issues related to model mismatch and to the actual feasibility of the employed reference path for the (uncertain and time-varying) dynamics of a real system. Moreover, an estimate or measure of the wind speed at the wing's location, as considered e.g. in \cite{IlHD07,WiLO08,CaFM09c}, is not needed in our approach, nor is a measure of the effective wind speed aligned with the wing's longitudinal body axis, used in \cite{ErSt12}, but only a rough estimate of the wind direction with respect to the ground. The structure of the controller is similar to the one proposed in \cite{ErSt12}, but the controllers employed for the inner and outer control loops are different. In particular, the inner controller is a simple static gain, while the
outer one is  given by a switching strategy based on the wing's position. The proposed control approach involves few parameters, that can be tuned in an intuitive way. By exploiting the above-mentioned results pertaining to the derivation of the control-oriented model, we assess the robustness of the inner control loop analytically, against a wide range of operating conditions in terms of wind speed and wing characteristics. We then present the experimental results obtained by  testing the approach on a small-scale prototype realized at the University of California, Santa Barbara, in different wind conditions and using different wings. The paper is organized as follows. section \ref{S:System_description} describes the considered layout and the derivation of the simplified model for the velocity angle dynamics. The control design is presented in section \ref{S:control_design} and experimental results are given in section \ref{S:results}. Finally, conclusions are drawn in section \ref{S:conclusion}.

\section{System description and model equations}\label{S:System_description}
\subsection{System layout}\label{SS:yo_yo_descr}
We consider a flexible wing, or power kite, connected by three lines to a ground unit (GU).
This setup corresponds to a prototype built at the University of California, Santa Barbara, shown in Fig. \ref{F:proto}.
 \begin{figure}[hbt]
 \centerline{
 \includegraphics[bbllx=13mm,bblly=59mm,bburx=196mm,bbury=284mm,width=7cm,clip]{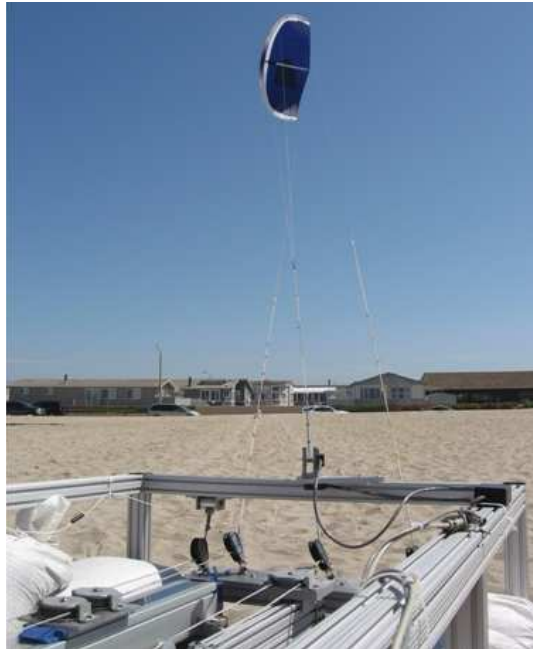}}
\caption{Small-scale prototype built at the University of California, Santa Barbara, to study the control of tethered wings for airborne wind energy.}\label{F:proto}
 \end{figure}
In normal flight conditions, the wing's trajectory evolves downwind with respect to the GU. For simplicity, we assume that the nominal wind direction (i.e. neglecting turbulence and small, zero-mean deviations) is aligned with the longitudinal symmetry axis of the GU, denoted by $X$. This condition can be achieved by properly orienting the GU, exploiting a measure or estimate of the nominal wind direction. With this assumption in mind, our control approach employs the feedback of the wing's position relative to the GU to obtain crosswind trajectories, i.e. flying paths that are downwind and symmetric with respect to $X$ axis (i.e. the wind direction). From our experimental results, it turns out that misalignments of about $\pm30^\circ$ between the wind and the GU do not change significantly the obtained paths relative to the GU, but indeed give place to lower forces on the lines, as expected from the theory (see e.g. \cite{FaMP11}). The $X$ axis, together with the $Z$ axis being perpendicular to
the ground and pointing upwards and with the $Y$ axis to complete a right-handed system, forms the inertial frame $G\doteq(X,Y,Z)$, centered at the GU (see Fig. \ref{F:system_1}). By considering a fixed length of the lines, denoted by $r$, the wing's trajectory is thus confined on a quarter sphere, given by the intersection of a sphere of radius $r$ centered at the GU's location and the planes ${(x,y,z)\in\mathbb{R}^3:x\geq0}$ and $z\geq0$.
\begin{figure}[hbt]
\centerline{
\includegraphics[bbllx=79mm,bblly=125mm,bburx=201mm,bbury=218mm,width=8cm,clip]{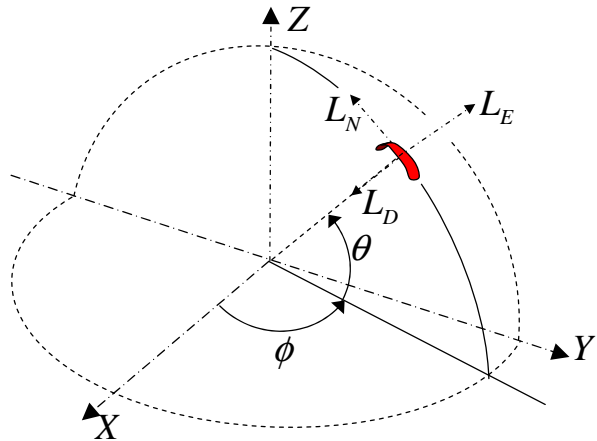}
} \caption{Reference system $G=(X,Y,Z)$, wind window (dashed lines), variables $\theta,\,\phi$, and local north, east and down $(L_N,L_E,L_D)$ axes.}\label{F:system_1}
\end{figure}
Such a quarter sphere is commonly named ``wind window'', see Fig. \ref{F:system_1} (dashed lines).\\
The two lateral lines, named steering lines, are linked to the back tips of the wing (see Fig. \ref{F:proto}) and they are used to influence its trajectory: a shorter left steering line with respect to the right one impresses a left turn to the wing (i.e. a counter-clockwise turn as seen from the GU), and vice-versa.  The center line, named power line, splits into two lines connected to the front of the wing (or leading edge) and sustains about 70\% of the generated load. The GU is designed to support the forces
acting on the lines and it is equipped with actuators, able to achieve a desired difference of steering lines' length. In the considered prototype, a single motor, together with a linear motion system (visible in the lower-left corner of Fig. \ref{F:proto}), is able to change the difference of length of the steering lines. In this work, we consider a fixed length of the lines of $r=30\,$m and we focus on the problem of designing a controller able to make the wing fly along ``figure-eight'' crosswind paths, which maximize the generated forces. The use of a fixed lines' length does not limit the significance of our results, since the problem of crosswind path control can be decoupled from the problem of controlling the lines' reeling. Indeed it has been shown that the optimal operation of ground-based airborne wind energy generators is achieved with a constant line speed (equal to approximately one third of the wind speed to maximize power production, see e.g. \cite{FaMP11}). The settings considered here can be seen as a particular case of constant line speed, equal to zero. Other approaches in the literature follow a similar strategy of using two separate control systems, one for the wing guidance and one for the line unrolling (see e.g. \cite{CaFM09c,BaOc12,ErSt12}). Rather, the use of such short lines makes the control problem more challenging, since the flight paths must be contained in a small area, in order to avoid contact with the ground or aerodynamic stall occurring at the border of the wind window.\\
The wing is equipped with onboard sensors and a radio transmitter; the receiver and other sensors are installed on the GU. 
The available sensors, together with suitable filtering algorithms, provide accurate real-time estimates of the wing's position and velocity vector, to be used for feedback control. For the sake of simplicity, in this paper the feedback variables are considered to be measured with no measurement noise; the interested reader is referred to \cite{FHBK12_arxiv} for details on the design of suitable filtering algorithms for the described setup. In the next section, we briefly recall a point-mass dynamical model of the system and introduce the notion of velocity angle of the wing. The latter represents one of the  feedback variables in our control approach.

\subsection{Model equations}\label{SS:model}

We consider the point-mass model already employed in previous works (see e.g. \cite{IlHD07,CaFM09c,BaOc12,FaMP09} and references therein) and we adapt it to a slightly different reference system, in order to introduce the wing's velocity angle. The latter is one of the feedback variables we use and it represents a novelty with respect to the mentioned previous approaches. For the sake of completeness, we recall here the main equations of the point-mass model, since they are instrumental to prove a theoretical result concerned with the steering dynamics of tethered wings.\\
By considering the fixed line length $r$, the wing's position can be expressed in the inertial frame $G$ by using the spherical coordinates $\theta(t),\,\phi(t)$ as (see Fig. \ref{F:system_1}):
\begin{equation}
\label{eqn:rKite}
_{G}\vec{p}(t) = \left[
\begin{array}{c}
r\cos{(\phi(t))}\cos{(\theta(t))}\\
r\sin{(\phi(t))}\cos{(\theta(t))}\\
r\sin{(\theta(t))}
\end{array}\right],
\end{equation}
where $t$ is the continuous time variable. 
In (\ref{eqn:rKite}) and throughout the paper, the subscript letter in front of vectors (e.g. $_{G}\vec{p}(t)$) denotes the reference system considered to express the vector components.\\
We define also a non-inertial coordinate system $L\doteq (L_N,L_E,L_D)$, centered at the wing's position (also depicted in Fig. \ref{F:system_1}). The $L_N$ axis, or local north, is tangent to the wind window and points towards its zenith.
The $L_D$ axis, called local down, points to the center of the sphere (i.e. the GU), hence it is perpendicular to the tangent plane to the wind window at the wing's location. The $L_E$ axis, named local east, forms a right hand system and spans the tangent plane together with $L_N$. We note that the system $L$ is a function of the wing's position only, and it is different from the local systems used in previous works (see e.g. \cite{CaFM09c} and the references therein), due to the different definition of angle $\theta$. A vector in the $L$ system can be expressed in the $G$ system by means of the following rotation matrix:
 \begin{equation}\small
 \begin{array}{l}
  R =\\\left(\begin{array}{ccc}
         -\cos{(\phi(t))}\sin{(\theta(t))} & -\sin{(\phi(t))} & -\cos{(\phi(t))}\cos{(\theta(t))}\\
         -\sin{(\phi(t))}\sin{(\theta(t))} &  \cos{(\phi(t))} & -\sin{(\phi(t))}\cos{(\theta(t))}\\
          \cos{(\theta(t))}           &  0          & -\sin{(\theta(t))}
        \end{array}\right)\nonumber
        \end{array}\normalsize
 \end{equation}
By applying Newton's law of motion to the wing in the reference $L$ we obtain:
\begin{subequations}
\begin{align}
   \ddot{\theta}(t)=\dfrac{_L\vec{F}(t)\cdot \vec{e}_{L_N}(t)}{rm} - \sin{(\theta(t))}\cos{(\theta(t))}\dot{\phi}^2(t)\label{eqn:EqnMot1}\\
   \ddot{\phi}(t)=\dfrac{_L\vec{F}(t)\cdot \vec{e}_{L_E}(t)}{rm\cos{(\theta(t))}} + 2\tan{(\theta(t))}\dot{\theta}(t)\dot{\phi}(t),
\label{eqn:EqnMot2}
\end{align}
\end{subequations}
where $m$ is the mass of the wing. In \eqref{eqn:EqnMot1}-\eqref{eqn:EqnMot2} and throughout the paper we denote unit vectors by ``$\vec{e}\,$'' followed by a subscript indicating the related axis, e.g. $\vec{e}_{L_N}(t)$ denotes the unit vector of the $L_N$ axis. The force $_L\vec{F}(t)$ consists of
contributions from the gravity force $_L\vec{F}_\text{g}(t)$ and aerodynamic force $_L\vec{F}_\text{a}(t)$.
Vector $_L\vec{F}_\text{g}(t)$ can be computed as:
\begin{equation}\label{eqn:Fgrav}
{}_L\vec{F}_\text{g}(t) =\left[
\begin{array}{c}
-mg\cos{(\theta(t))}\\
0\\
mg\sin{(\theta(t))}\end{array}
\right],
\end{equation}
where $g$ is the gravity acceleration.
The aerodynamic force is given by the contributions of the lift and drag generated by the wing and of the drag induced by the cable.
These forces depend on the the effective wind, $\vec{W}_e(t)$, computed as:
\begin{equation}\label{E:eff_wind}
\vec{W}_e(t)=\vec{W}(t)-\vec{v}(t),
\end{equation}
where $\vec{W}(t)$ is the wind relative to the ground and $\vec{v}(t) \doteq \frac{d}{dt}\vec{p}(t)$ is the wing velocity vector, which can be expressed in the $L$ frame as
\begin{equation}\label{eqn:vP}
_{L}\vec{v}(t)= \left[\begin{array}{c}
r\dot{\theta}(t)\\
r\cos{\left(\theta(t)\right)}\dot{\phi}(t)\\
0
\end{array}\right].
\end{equation}
Then, the  aerodynamic force $_L\vec{F}_\text{a}(t)$ can be computed as (see e.g. \cite{CaFM09c}):
\begin{subequations}\label{E:aero_force}
\begin{align}
 \vec{F}_\text{a}(t)=&\frac{1}{2}\rho C_L(t)A|\vec{W}_e(t)|^2\vec{z}_w(t) + \label{E:aero_force1}\\
                 &\frac{1}{2}\rho C_D(t)A|\vec{W}_e(t)|^2\vec{x}_w(t) + \label{E:aero_force2}\\
                 & \frac{1}{8}\rho C_{D,l}A_l\cos{\left(\Delta\alpha(t)\right)}|\vec{W}_e(t)|^2\vec{x}_w(t)\label{E:aero_force3}\\
                =& \frac{1}{2}\rho C_L(t)A|\vec{W}_e(t)|^2\vec{z}_w(t) + \nonumber\\
                 &\frac{1}{2}\rho\underbrace{\left(C_D(t)+\frac{C_{D,l}A_l\cos{\left(\Delta\alpha(t)\right)}}{4A}\right)}_{C_{D,eq}(t)}
                               A|\vec{W}_e(t)|^2\vec{x}_w(t)\label{E:aero_force4}.
\end{align}
\end{subequations}
In \eqref{E:aero_force}, the contributions \eqref{E:aero_force1}-\eqref{E:aero_force2} are, respectively, the lift and drag forces generated by the wing, while \eqref{E:aero_force3} is the drag induced by the lines. $C_L(t)$ and $C_D(t)$ are the aerodynamic lift and drag coefficients of the wing, $C_{D,l}$ is the drag coefficient of the lines,
$A$ is the reference area of the wing, $A_l$ is the reference area of the lines, $\rho$ is the air density, and $\vec{x}_w(t)$
and $\vec{z}_w(t)$ are the directions of the drag and lift forces, respectively. The parameter $C_{D,eq}(t)$ is called the equivalent aerodynamic drag coefficient, since it accounts for the drag of both the wing and the lines. We note that the aerodynamic coefficients are considered as time-varying parameters here, since they depend on the wing's angle of attack, which in turns changes in time as a function of the flight conditions. The variable
$\Delta\alpha(t)$ is the angle between the effective wind vector $\vec{W}_e(t)$ and the tangent plane to the wind window at the wing's location. The vectors $\vec{x}_w(t)$ and $\vec{z}_w(t)$, defining the directions of the lift and drag forces, depend on the direction of the effective wind and on the roll angle
$\psi(t)$ of the wing. In particular, $\vec{x}_w(t)$ points in the direction of the effective wind $\vec{W}_e(t)$, while $\vec{z}_w(t)$ is perpendicular to $\vec{x}_w(t)$ and
to a further vector, denoted by $\vec{e}_t(t)$, which points from the right tip of the wing to the left one, as seen from the GU (see e.g. \cite{CaFM09c} for a
formal definition).\\

Vectors $\vec{x}_w(t)$ and $\vec{z}_w(t)$ can be expressed in the $L$ frame as:
\begin{subequations}\label{E:aero_vectors}\small
\begin{align}
 {}_L\vec{x}_w(t)=\begin{pmatrix}
		           -\cos{(\xi(t))}&-\sin{(\xi(t))}& 0\\
		           -\sin{(\xi(t))}& \cos{(\xi(t))}& 0\\
		            0         & 0          &-1
		          \end{pmatrix}\cdot		          \begin{pmatrix}\cos{(\Delta\alpha(t))}\\0\\\sin{(\Delta\alpha(t))}\end{pmatrix}\label{eqn:Lxw}\\
 {}_L\vec{z}_w(t)=\begin{pmatrix}
		           -\cos{(\xi(t))}&-\sin{(\xi(t))}& 0\\
		           -\sin{(\xi(t))}& \cos{(\xi(t))}& 0\\
		            0          & 0         &-1
		          \end{pmatrix}\label{eqn:Lzw}\cdot\\
	            \begin{pmatrix}
                           -\cos{(\psi(t))}\cos{\left(\eta(t)\right)}\sin{(\Delta\alpha(t))}\\
                           \cos{(\psi(t))}\sin{\left(\eta(t)\right)}\sin{(\Delta\alpha(t))}+\sin{(\psi(t))}\cos{(\Delta\alpha(t))}\\
                           \cos{(\psi(t))}\cos{\left(\eta(t)\right)}\cos{(\Delta\alpha(t))}
                          \end{pmatrix}\nonumber.
\end{align}\normalsize
\end{subequations}
In \eqref{E:aero_vectors},  $\eta(t)$ is given by (see e.g. \cite{ArRS09}):
\begin{equation}\label{eqn:eta}
 \eta(t) = \arcsin{\left(\tan{\left(\Delta\alpha(t)\right)}\tan{(\psi(t))}\right)},
\end{equation}
and $\psi(t)$ is a function of the steering input, $\delta(t)$:
\begin{equation}
 \psi(t) =\arcsin{\left(\frac{\delta(t)}{d_s}\right)}\label{eqn:psi}
\end{equation}
where $d_s$ is the wing span. Finally, $\xi(t)$ is the heading angle of the wing, and it is computed as the angle between the local north $L_N$ and the effective wind  $\vec{W}_e(t)$ projected on the $(L_N,L_E)$ plane:
\begin{equation}
 \xi(t)= \arctan{\left(\frac{\vec{W}_e(t)\cdot\vec{e}_{L_E}(t)}{\vec{W}_e(t)\cdot\vec{e}_{L_N}(t)}\right)}\label{eqn:Beta}.
\end{equation}
In \eqref{eqn:Beta}, the four-quadrant version of the arc tangent function shall be used, such that $\xi(t)\in[-\pi,\,\pi]$. The assumption underlying equation \eqref{eqn:Beta} for the computation of the wing's heading is that the wing's longitudinal body axis is always contained in the plane spanned by vectors  $\vec{W}_e(t)$ and $\vec{p}(t)$.\\
Equations \eqref{eqn:EqnMot1}-\eqref{eqn:Beta} give an analytic
expression for the point-mass model of the wing, with four states ($\theta(t),\,\phi(t),\,\dot{\theta}(t),\dot{\phi}(t)$), one manipulated input ($\delta(t)$) and three exogenous inputs (the components of vector $\vec{W}(t)$). Such a model has been widely used in the literature on control design for airborne wind energy applications, see e.g. \cite{CaFM07,IlHD07,CaFM09c,BaOc12}. Several existing approaches are based on nonlinear model predictive control techniques \cite{IlHD07,CaFM09c}, leading to quite complex multivariable controllers that rely on the feedback of the four states. In the section \ref{S:control_model}, we show how a simpler model, which we will use for our control design, can be derived from the dynamics \eqref{eqn:EqnMot1}-\eqref{eqn:Beta}. The variable involved in such control-oriented model is the velocity angle $\gamma(t)$ of the wing, defined as:
\begin{equation}
 \gamma(t) \doteq \arctan{\left(\frac{\vec{v}(t)\cdot\vec{e}_{L_E}(t)}{\vec{v}(t)\cdot\vec{e}_{L_N}(t)}\right)}
         =  \arctan{\left(\dfrac{\cos{(\theta(t))}\dot{\phi}(t)}{\dot{\theta}(t)}\right)}.\label{eqn:Gamma}
\end{equation}
The angle $\gamma(t)$ is thus the angle between the local north $\vec{e}_{L_N}(t)$ and the wing's velocity vector $\vec{v}(t)$. This variable is particularly suited for feedback control,
since it describes the flight conditions of the wing with just one scalar: as an example,  if $\gamma =0$ the wing is moving upwards towards the
zenith of sphere, if $\gamma = \pi/2$ the wing is moving parallel to the ground towards the local east, finally if $\gamma = \pi$ the wing
is flying towards the ground. Hence, the time derivative $\dot{\gamma}$ defines how fast the wing is being steered while flying in the wind window. 
Similarly to \eqref{eqn:Beta}, also in \eqref{eqn:Gamma} the four-quadrant version of the arc tangent function shall be used, such that $\gamma(t)\in[-\pi,\,\pi]$. 

\subsection{Input model}\label{SS:input_model}

The prototype used for our test flights features two attachment points for the steering lines on the GU, left and right. These attachment points
are separated by a distance $d$ (see Fig. \ref{F:proto}, in the lower part, where the attachment points with swaying pulleys are visible). When the wing's lines are not aligned with the $X$ axis, this distance induces an equivalent steering deviation. We call such deviation ``geometric input'', $\delta_\text{g}$, since its value depends on the geometry of the attachment points and on the ($\theta,\phi$) position of the wing in the wind window. Hence, the overall steering input acting on the wing is:
\begin{equation}\label{eqn:input_total}
\delta(t)=\delta_\text{u}(t)+\delta_\text{g}(t),
\end{equation}
where $\delta_\text{u}(t)$ is the input issued by the control system, i.e. the difference of length of the steering lines (right minus left) obtained by changing the position of the linear motion system on the GU. In this section, we derive an expression to compute the geometric input as a function of $\theta(t)$, $\phi(t)$ and $d$. 
Consider the GU and the wing as seen from above, such that the trace of the wind window at the wing's height is a semicircle of radius  $r\cos{(\theta)}$ and, for fixed $\theta$, the wing position is univocally identified by $\phi$. Let us consider first a situation in which both wing tips lie on the tangent plane to the wind window at the current wing position $(\theta,\phi)$. Since the steering of the wing is essentially induced by a roll motion, by which the wing tips are moved away from the tangent plane, we call this orientation ``neutral configuration'' (see Fig. \ref{fig:PhiThetaInputDist}, gray drawing). Now, assuming that $\delta_\text{u}=0$, the left and right lines have the same length, so that the wing tips are forced to leave the neutral
configuration (see Fig. \ref{fig:PhiThetaInputDist}, black drawing), in the same way as a steering input were acting on the wing. In other words, in order for the wing to be in a neutral configuration in the presence of a value of $\phi\neq0$, the two steering lines should have a difference of length equal to $d\, \sin{(\phi)}\cos{(\theta)}$. Hence, we can compute the geometric input as:
\begin{equation}
 \delta_\text{g}(t) = -d\, \sin{(\phi(t))}\cos{(\theta(t))}.\label{eqn:PhiThetaInputDist}
\end{equation}
We note that the geometric input is always bounded by $d$, moreover with an increasing $\theta$ position of the wing its value decreases and becomes
eventually zero if the wing is at the zenith position of the wind window.
\begin{figure}[hbt]
\centerline{
\includegraphics[bbllx=92mm,bblly=133mm,bburx=192mm,bbury=210mm,width=8cm,clip]{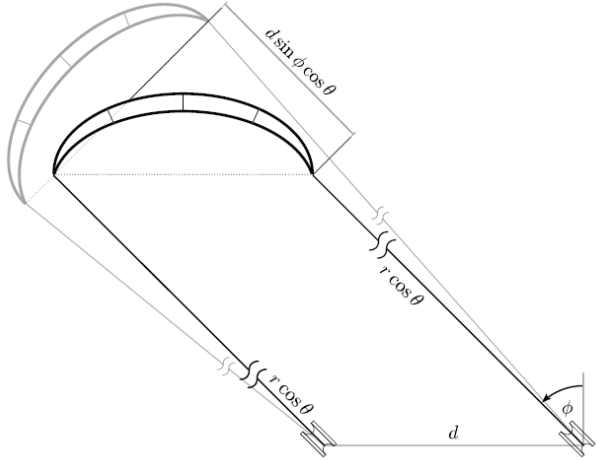}
} \caption{Sketch of the wing in a generic $(\theta,\phi)$ position as seen from above, and related geometric input $\delta_\text{g}=-d\, \sin{(\phi)}\, \cos{(\theta)}$.}\label{fig:PhiThetaInputDist}
\end{figure}
The minus sign in \eqref{eqn:PhiThetaInputDist} derives from the fact that the geometric  input imposes a counter-clockwise turn (as seen from the GU) for $\phi < 0$ (i.e. $\dot{\gamma}>0$) and a clockwise turn for $\phi > 0$ (i.e. $\dot{\gamma}<0$).

\subsection{Control-oriented model for tethered wings}\label{S:control_model}
In \cite{ErSt12},  a simple model was presented and used for the control design, where the time derivative of the heading angle of the wing is given as a function of the control variable. Here, we consider a similar simplified model, where we use the velocity angle instead of the heading angle. This choice is supported by the assumption of small sideslip angle, as formally stated below:
\begin{assumption}
The difference between the velocity angle $\gamma(t)$ and the heading angle $\xi(t)$ is negligible, i.e. the effective wind projected onto the tangent plane to the wind window at the wing's location is equal to the wing's velocity $\vec{v}(t)$. Moreover, all the forces in the direction of vector $\vec{v}(t)$ are negligible as compared to lift and drag.\hfill$\blacksquare$
 \label{as:Crosswind}
\end{assumption}
Assumption \ref{as:Crosswind} is common in the analysis of airborne wind energy generators (see, e.g., \cite{Loyd80,FaMP11}) and it is reasonable whenever the wing is flying downwind roughly perpendicularly to the wind flow. As a further justification for this assumption, we show in Fig. \ref{fig:GammaVsBeta} the good matching between the values of $\gamma(t)$ and $\xi(t)$ measured during figure-eight paths carried out with our small-scale prototype and a 9-m$^2$ power kite.
 \begin{figure}[hbt]
 \centerline{
 \includegraphics[bbllx=10mm,bblly=68mm,bburx=195mm,bbury=204mm,width=7.5cm,clip]{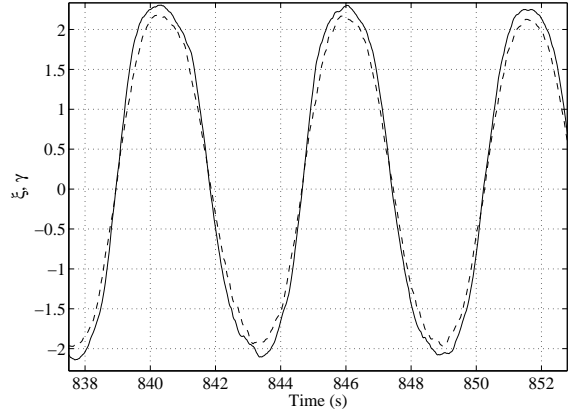}
 } \caption{Experimental results. Angles $\gamma$ (solid line) and $\xi$ (dotted) measured during controlled flights in crosswind conditions. Wing size: 9-m$^2$.}\label{fig:GammaVsBeta}
 \end{figure}
With Assumption \ref{as:Crosswind} in mind, the model proposed in \cite{ErSt12} can be written in the following form:
\begin{equation}\label{eqn:gamma_model_form}
\dot{\gamma}(t) \simeq \tilde{K}(t)\delta(t) + \tilde{T}(t).
\end{equation}
In particular, in \cite{ErSt12} the term $\tilde{K}(t)$ is given by a constant gain, $g_K$, multiplied by the magnitude of the effective wind speed aligned with the wing's heading. Such a model was justified mainly through experimental
results in \cite{ErSt12}, where the gain $g_K$ was derived empirically. In \cite{BaOc12}, a model of the form \eqref{eqn:gamma_model_form} is also used, and justified by some assumptions. Equation \eqref{eqn:gamma_model_form} provides indeed a model well-suited for control design, with the right balance between accuracy and simplicity, however neither \cite{ErSt12} nor \cite{BaOc12} derived an explicit relationship between the model's parameters and the main characteristics of the system, such as wing size, mass, or efficiency. The theoretical result we are presenting next is aimed to fill this gap, by linking equation \eqref{eqn:gamma_model_form} to the first-principle model recalled in section \ref{SS:model}. Moreover, we present experimental data that confirm the validity of our result. We consider the following assumption on the roll angle $\psi(t)$:
\begin{assumption}
 The angle $\psi(t)$ is sufficiently small to linearize its trigonometric functions. \hfill$\blacksquare$
 \label{as:SmallInput}
\end{assumption}
Assumption \ref{as:SmallInput} is reasonable in the considered context, since for example the prototype at the UC Santa Barbara operates with $\psi\simeq\pm 7.5^\circ$ (for a 12-m$^2$ wing with wingspan $d_s=3.1\,$m) up to $\psi\simeq\pm 12.8^\circ$ (for a 6-m$^2$ wing with wingspan $d_s=1.8\,$m).\\
We can now state our theoretical result.
\begin{proposition}\label{P:gamma_dot} Let assumptions \ref{as:Crosswind}-\ref{as:SmallInput} hold. Then, equation \eqref{eqn:gamma_model_form} holds with:
\begin{subequations}\label{eqn:GammaCtrlLaw}
\begin{align}
 \tilde{K}(t) =
 \frac{\rho\,C_L(t)\,A}{2md_s}\left(1+\frac{1}{E_{eq}^2(t)}\right)^2\,|\vec{v}(t)|\label{eqn:GammaCtrlLaw1}\\
 \tilde{T}(t)=\frac{g\cos{(\theta(t))}\, \sin{(\gamma(t))}}{|\vec{v}(t)|}+\sin{(\theta(t))}\, \dot{\phi}(t),\label{eqn:GammaCtrlLaw2}
\end{align}
\end{subequations}
where $|\vec{v}(t)|$ is the magnitude of the wing's velocity, and $E_{eq}(t)\doteq C_L(t)/C_{D,eq}(t)$.
\end{proposition}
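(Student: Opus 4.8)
The plan is to derive \eqref{eqn:gamma_model_form}--\eqref{eqn:GammaCtrlLaw} by differentiating the definition \eqref{eqn:Gamma} of $\gamma(t)$ along the trajectories of the point-mass model \eqref{eqn:EqnMot1}--\eqref{eqn:Beta}. Writing $\gamma$ as the four-quadrant arctangent of $\cos\theta\,\dot\phi$ over $\dot\theta$ and differentiating gives $\dot\gamma=\dfrac{\dot\theta\,\frac{d}{dt}(\cos\theta\,\dot\phi)-\cos\theta\,\dot\phi\,\ddot\theta}{\dot\theta^{2}+\cos^{2}\theta\,\dot\phi^{2}}$. In the denominator I recognize, via \eqref{eqn:vP}, that $\dot\theta^{2}+\cos^{2}\theta\,\dot\phi^{2}=|\vec v|^{2}/r^{2}$, and I use the kinematic identities $r\dot\theta=|\vec v|\cos\gamma$ and $r\cos\theta\,\dot\phi=|\vec v|\sin\gamma$ that follow from \eqref{eqn:vP} and \eqref{eqn:Gamma}. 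Substituting $\ddot\theta$ and $\ddot\phi$ from \eqref{eqn:EqnMot1}--\eqref{eqn:EqnMot2}, the purely kinematic (centrifugal/Coriolis) terms collapse to $\sin\theta\,\dot\phi$, while the force terms combine into the projection of ${}_L\vec F(t)$ onto the in--tangent--plane unit vector orthogonal to the velocity, namely $\vec e_\perp\doteq(-\sin\gamma,\ \cos\gamma,\ 0)$ in the $L$ frame. This yields the exact intermediate identity $\dot\gamma=\sin\theta\,\dot\phi+\dfrac{{}_L\vec F(t)\cdot\vec e_\perp}{m\,|\vec v(t)|}$, from which all the structure of \eqref{eqn:GammaCtrlLaw} must come; no approximation has been used so far.

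Next I would split ${}_L\vec F={}_L\vec F_\text{g}+{}_L\vec F_\text{a}$. From \eqref{eqn:Fgrav} one obtains ${}_L\vec F_\text{g}\cdot\vec e_\perp=mg\cos\theta\,\sin\gamma$, which contributes $g\cos\theta\,\sin\gamma/|\vec v|$; together with the $\sin\theta\,\dot\phi$ term this is exactly $\tilde T(t)$ as in \eqref{eqn:GammaCtrlLaw2}, so that part is essentially free. For the aerodynamic force \eqref{E:aero_force4}, the drag contributions \eqref{E:aero_force2}--\eqref{E:aero_force3} are directed along $\vec x_w$, i.e.\ along $\vec W_e$; by Assumption~\ref{as:Crosswind} their tangent-plane part is (anti-)parallel to $\vec v$ and the remaining part is along the radial direction $\vec e_{L_D}$, so, up to the negligible sideslip $\gamma-\xi$, they are orthogonal to $\vec e_\perp$ and drop out --- this is where ``all forces along $\vec v$ are negligible'' is invoked. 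The surviving term is the lift \eqref{E:aero_force1}, and its contribution requires evaluating ${}_L\vec z_w\cdot\vec e_\perp$ from \eqref{eqn:Lzw}: setting $\gamma=\xi$ and then linearizing the trigonometric functions of $\psi$ (Assumption~\ref{as:SmallInput}), which by \eqref{eqn:eta} also makes $\eta$ small, the trigonometric identities collapse this projection to $\sin\psi/\cos(\Delta\alpha)$ (the $\xi$--dependence cancels), and \eqref{eqn:psi} then gives $\sin\psi=\delta/d_s$.

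It remains to rewrite the dynamic-pressure factor $|\vec W_e|^{2}$ in terms of $|\vec v|$ and $E_{eq}$. Assumption~\ref{as:Crosswind} states that the projection of $\vec W_e$ onto the tangent plane has magnitude $|\vec v|$, so $|\vec W_e|=|\vec v|/\cos(\Delta\alpha)$; moreover the quasi-steady balance of the aerodynamic force along $\vec v$ --- again a consequence of Assumption~\ref{as:Crosswind}, since lift and drag are the only non-negligible forces and their components along $\vec v$ must then cancel --- gives $\tan(\Delta\alpha)=C_{D,eq}/C_L=1/E_{eq}$, hence $\cos(\Delta\alpha)=(1+1/E_{eq}^{2})^{-1/2}$. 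Plugging the lift contribution $\dfrac{1}{m|\vec v|}\cdot\dfrac12\rho\,C_L A\,|\vec W_e|^{2}\cdot\dfrac{\delta}{d_s\cos(\Delta\alpha)}$ into $\dot\gamma$, replacing $|\vec W_e|$ and $\cos(\Delta\alpha)$ by the above, and collecting the powers of $\left(1+1/E_{eq}^{2}\right)$, one is left with $\tilde K(t)\,\delta(t)$ and $\tilde K$ as in \eqref{eqn:GammaCtrlLaw1}, which completes the argument.

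The differentiation and the trigonometric simplifications are routine. The step I expect to be the real obstacle is the last one: correctly pinning down $\Delta\alpha$ through the equilibrium-glide relation $\tan(\Delta\alpha)=1/E_{eq}$ and then carefully tracking how the geometric factors combine --- $|\vec W_e|$ versus its tangent projection versus $|\vec v|$, the $1/\cos(\Delta\alpha)$ that emerges in ${}_L\vec z_w\cdot\vec e_\perp$, and the $1/\cos(\Delta\alpha)$ from the dynamic pressure --- since a careless accounting easily produces a different power of $\left(1+1/E_{eq}^{2}\right)$ than the exponent $2$ in \eqref{eqn:GammaCtrlLaw1}. It is also worth checking that the small-$\psi$ and zero-sideslip approximations are applied consistently across all the $\cos(\Delta\alpha)$ factors rather than in some terms only.
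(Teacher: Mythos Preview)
Your plan coincides with the paper's proof: differentiate \eqref{eqn:Gamma}, insert \eqref{eqn:EqnMot1}--\eqref{eqn:EqnMot2}, isolate the gravity and apparent-force pieces as $\tilde T$, and reduce the aerodynamic contribution using Assumptions~\ref{as:Crosswind}--\ref{as:SmallInput} together with the equilibrium relation $\tan\Delta\alpha=1/E_{eq}$. Your intermediate identity $\dot\gamma=\sin\theta\,\dot\phi+{}_L\vec F\cdot\vec e_\perp/(m|\vec v|)$ and your evaluation ${}_L\vec z_w\cdot\vec e_\perp=\sin\psi/\cos\Delta\alpha$ (under $\gamma=\xi$) are both correct and in fact cleaner than the paper's intermediate expressions \eqref{eqn:GammaDot}--\eqref{eqn:GammaDotSimple}.

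The worry you flag in your last paragraph is justified, however: your route does \emph{not} land on the exponent~$2$ in \eqref{eqn:GammaCtrlLaw1}. With $|\vec W_e|^2=|\vec v|^2/\cos^2\Delta\alpha$ and ${}_L\vec z_w\cdot\vec e_\perp=\sin\psi/\cos\Delta\alpha$, the lift term carries a total factor $1/\cos^3\Delta\alpha=(1+1/E_{eq}^2)^{3/2}$, not $(1+1/E_{eq}^2)^{2}$. The paper obtains the exponent~$2$ by a different order of approximation: it \emph{also} linearizes the aerodynamic direction vectors in $\Delta\alpha$ (taking $\sin\Delta\alpha\simeq\Delta\alpha$, $\cos\Delta\alpha\simeq1$, $\eta\simeq\Delta\alpha\,\psi$), which gives the second component of ${}_L\vec z_w$ as $(1+\Delta\alpha^2)\psi$ in \eqref{eqn:FaerNED}, and only \emph{afterwards} substitutes $\Delta\alpha=1/E_{eq}$; combined with $|\vec W_e|^2\simeq(1+1/E_{eq}^2)|\vec v|^2$ (the exponent printed in \eqref{eqn:v_p_eff} is inconsistent with the last displayed line of the appendix and should be read as~$1$), this produces $(1+1/E_{eq}^2)^{2}$. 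The two results differ only at second order in $1/E_{eq}$ and are numerically indistinguishable for $E_{eq}\approx5$, but if you want to reproduce \eqref{eqn:GammaCtrlLaw1} exactly you must adopt the paper's linearize-then-substitute convention for $\Delta\alpha$ rather than carrying $\cos\Delta\alpha$ through and using $\tan\Delta\alpha=1/E_{eq}$ at the end.
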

\begin{proof}See the Appendix.\end{proof}
Proposition \ref{P:gamma_dot} provides an explicit link between the main lumped parameters of the wing, like area, mass and lift coefficient, and the gain and external disturbance of \eqref{eqn:gamma_model_form}. It is worth elaborating more on this result and its implications. According to equation
\begin{figure*}[hbt]
\centerline{
\begin{tabular}{cc}
(a) & (b)\\
\includegraphics[bbllx=17mm,bblly=58mm,bburx=200mm,bbury=203mm,width=7cm,clip]{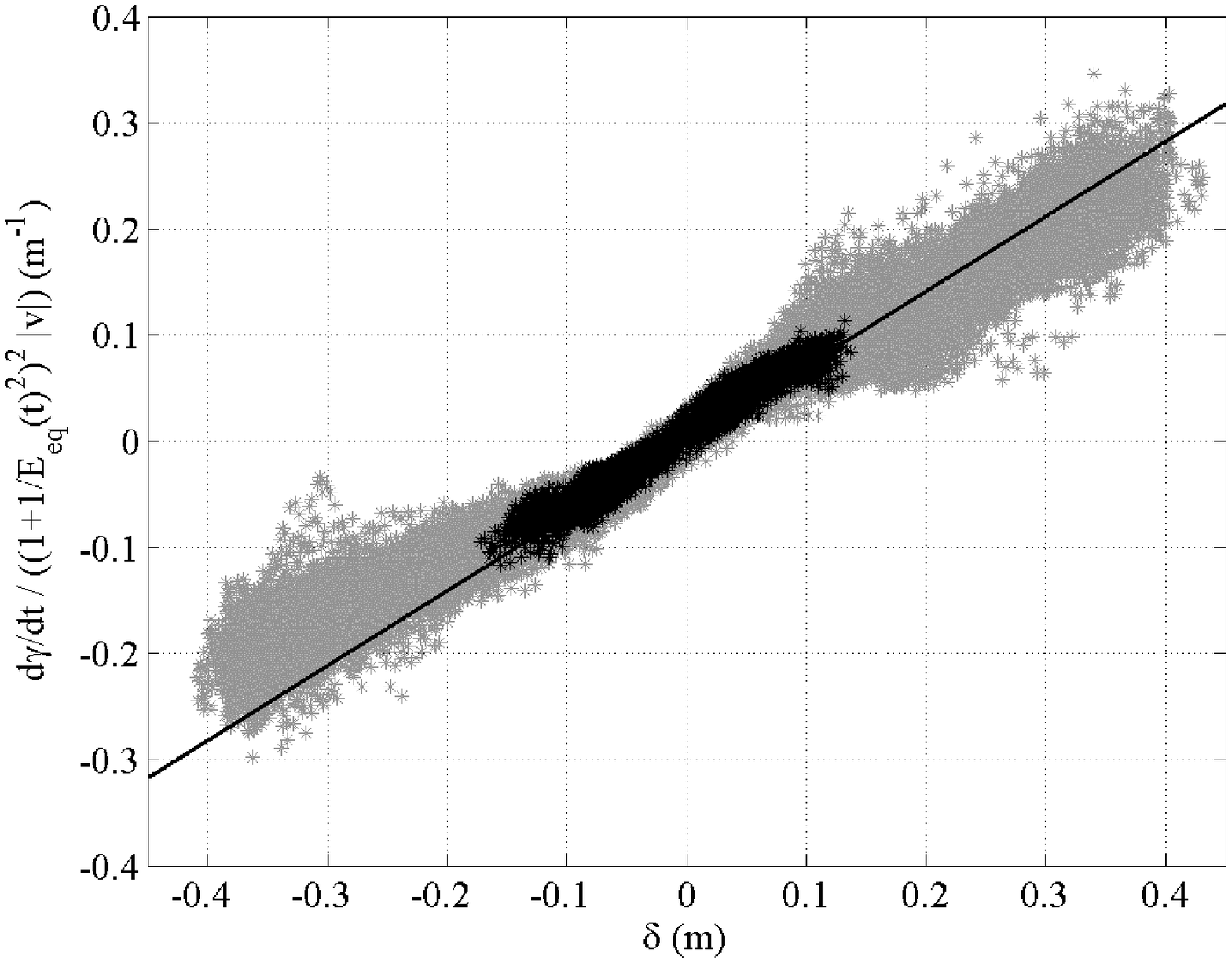}&
\includegraphics[bbllx=17mm,bblly=58mm,bburx=200mm,bbury=203mm,width=7cm,clip]{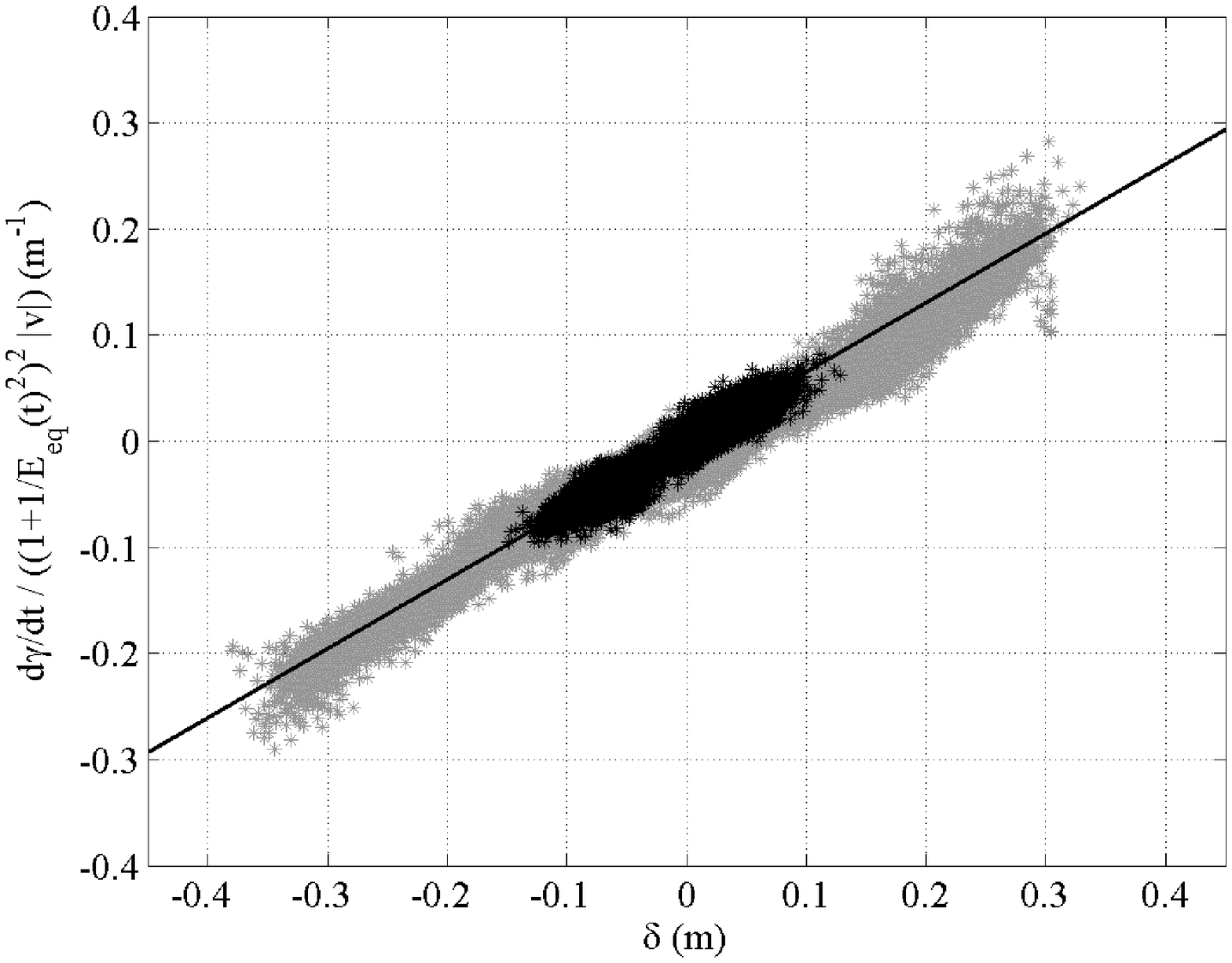}\\
(c)&(d)\\
\includegraphics[bbllx=13mm,bblly=58mm,bburx=200mm,bbury=203mm,width=7cm,clip]{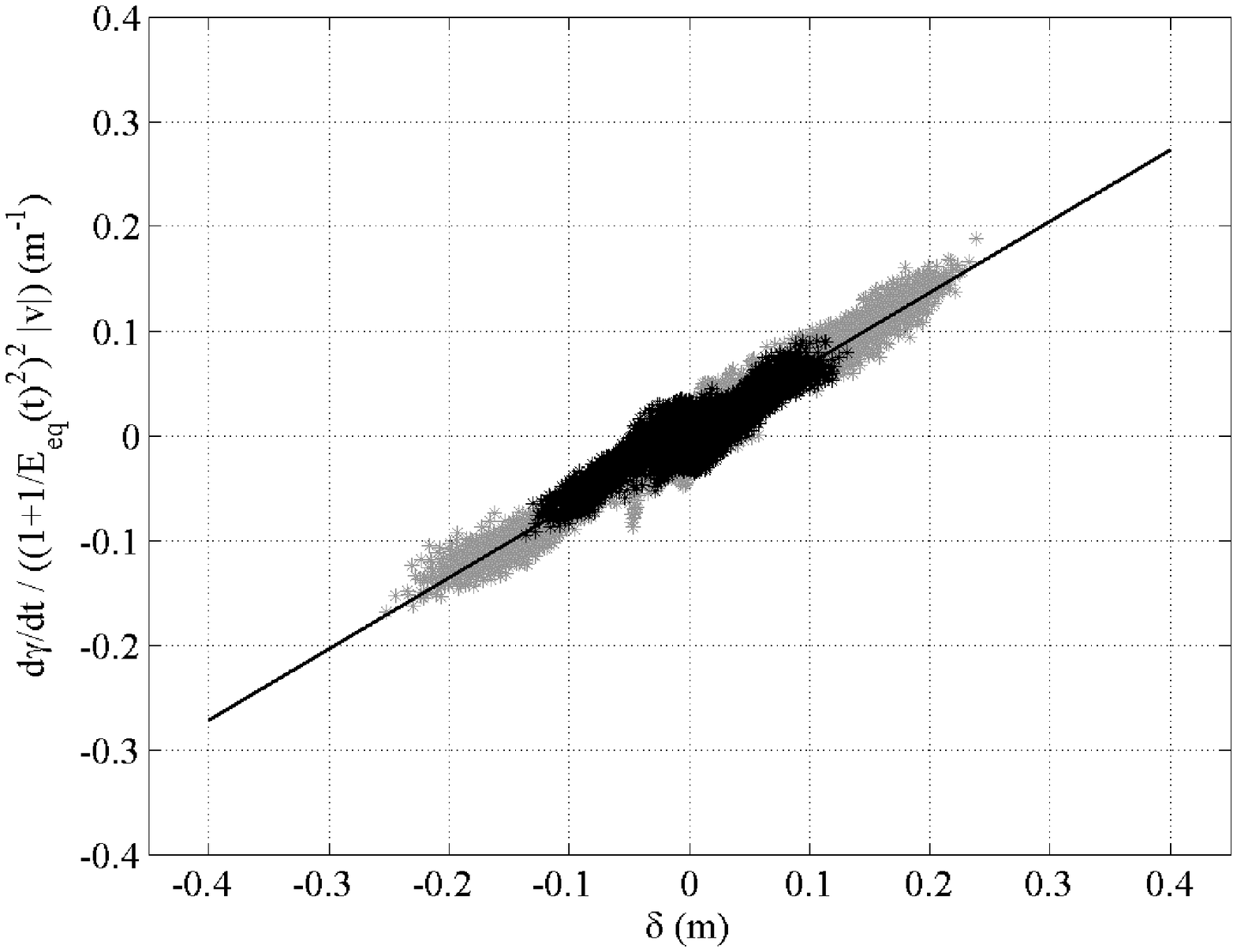}&
\includegraphics[bbllx=14mm,bblly=70mm,bburx=194mm,bbury=202mm,width=7cm,clip]{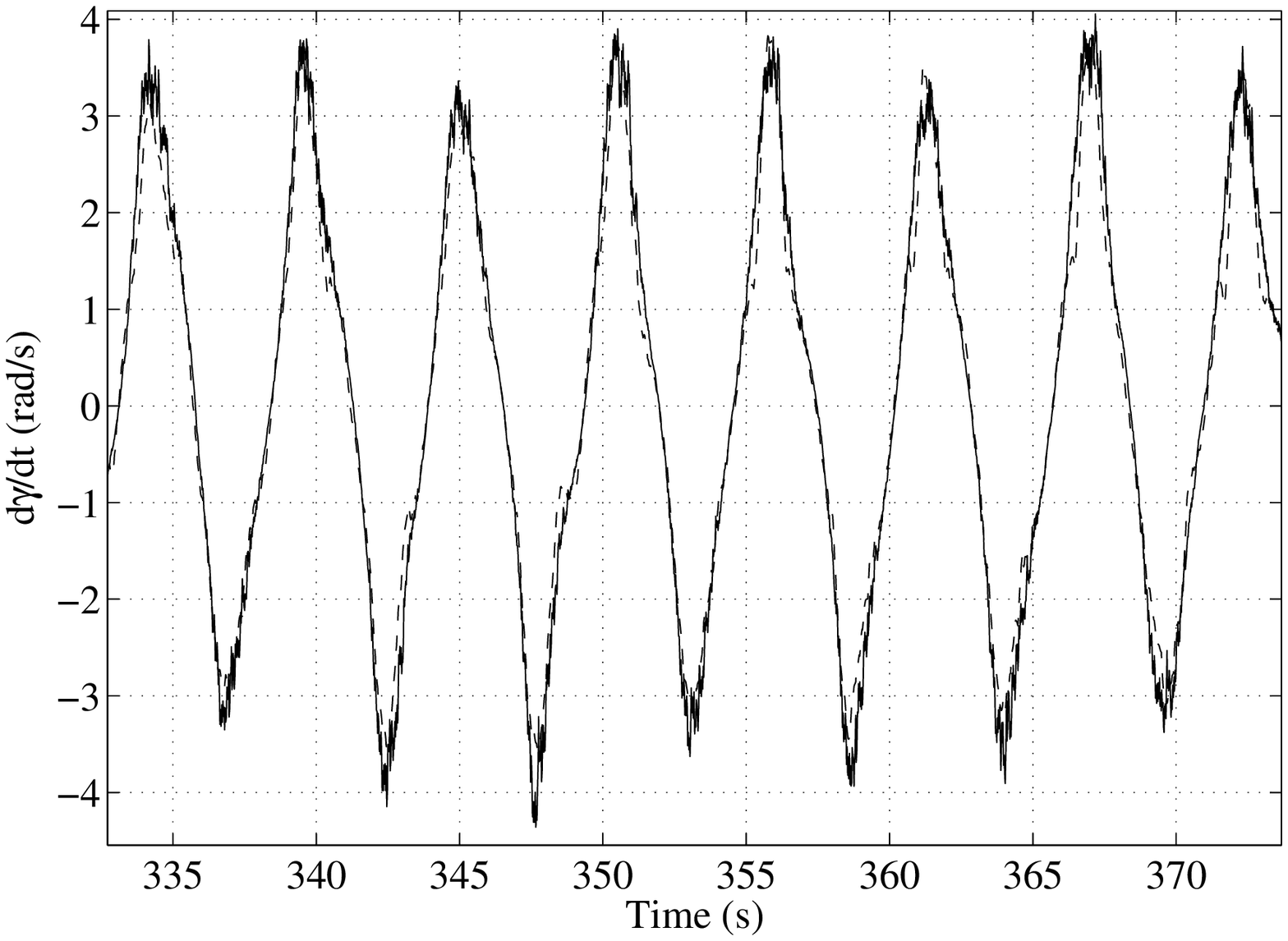}\\
\end{tabular}} \caption{Experimental results. (a)-(c): Comparison between the measured values of $\dot{\gamma}(t)/\left(\left(1+\frac{1}{E_{eq}^2(t)}\right)^2\,|\vec{v}(t)|\right)$ as a function of the steering input $\delta(t)$ (gray and black dots) and the theoretical linear relationship given by the gain $\frac{\rho\,C_L\,A}{2md_s}$ as per equation \eqref{eqn:GammaCtrlLaw1} (solid line).  The gray dots represent experimental data collected in the whole range of $\theta,\,\phi$ spanned by the wing during operation, while the black dots represent values collected when $|\phi|\leq5^\circ$, i.e. in crosswind conditions. (a) Airush One$^\circledR$ 6 kite, (b) Airush One$^\circledR$ 9 kite, (a) Airush One$^\circledR$ 12 kite. The lumped parameters for the kites are reported in Table \ref{T:wing_param}.
(d): Comparison between the value of $\dot{\gamma}$ obtained in experimental tests (solid) and the one estimated using the simplified model and the result of Proposition \ref{P:gamma_dot} (dashed). Wing size: 6-m$^2$.} \label{F:power_vs_parameters}
\end{figure*}
\eqref{eqn:gamma_model_form}, there is basically an integrator between the control input $\delta(t)$ and the velocity angle $\gamma(t)$, with a time-varying gain. In the first term of \eqref{eqn:GammaCtrlLaw1}, we can see that such a gain increases as the wing's speed does; thus, a larger speed provides higher control authority but it can also bring forth stability issues, if the control system is not properly designed. For a given wing flying in crosswind conditions, the term $\frac{\rho\,C_L(t)\,A}{2md_s}\left(1+\frac{1}{E_{eq}^2(t)}\right)^2$ does not change significantly during operation, hence supporting the model proposed by \cite{ErSt12}, where, as mentioned above, a constant gain $g_K$, multiplied by the effective wind speed, is used to compute $\tilde{K}$. Equation \eqref{eqn:GammaCtrlLaw1} also implies that a larger area-to-mass ratio gives in general a higher gain $\tilde{K}$, and that the steering behaviors of wings with similar design (i.e. similar aerodynamic coefficients) but different sizes are expected to be similar, provided that the area-to-mass ratio, $A/m$, does not change much when the size is scaled up. However,  the latter consideration holds true if just the wing itself is considered, since for example the use of larger wings would require lines with larger diameter, and the consequent added mass and induced drag would generally reduce both the equivalent efficiency and the area-to-mass ratio.\\
Furthermore, since the wing's speed is roughly proportional to the effective wind speed  projected along the lines' direction (see e.g. \cite{FaMP11}), the gain $\tilde{K}$ can be also re-written as (assuming the wind is aligned with the $X$-axis, as we remarked in section \ref{SS:yo_yo_descr}):\\
\begin{equation}\label{eqn:GammaCtrlLaw3}\small
\tilde{K}(t) =
 \frac{\rho\,C_L(t)\,A\,E_{eq}(t)\cos{(\theta(t))}\cos{(\phi(t))}}{2md_s}\left(1+\frac{1}{E_{eq}^2(t)}\right)^2\,|\vec{W}|.
\end{equation}\normalsize
According to \eqref{eqn:GammaCtrlLaw3}, the value of $\tilde{K}$ is highest when the wing is flying crosswind, and it decreases as the wing approaches the borders of the wind window (i.e. $\theta\simeq0$ and/or $\phi\simeq\pm\frac{\pi}{2}$). Therefore, maneuvering the wing in these conditions requires larger control inputs and in some situations it might be not possible, leading to a complete loss of controllability.  Finally, the wing's efficiency and lift coefficient also play an important role: in particular, for fixed drag coefficient the gain is expected to grow quadratically with the lift coefficient, as it can be noted from \eqref{eqn:GammaCtrlLaw3} by considering that $E_{eq}=C_L/C_{D,eq}$.\\
The term $\tilde{T}(t)$ basically accounts for the steering effect that the gravity and apparent forces have on the wing. It can be noted that the influence of gravity gets smaller with larger wing speed, hence it can be easily dominated by the control action when the wing's lines are aligned with the wind, while it becomes more and more important as the wing moves to the side of the wind window. 
The result of Proposition \ref{P:gamma_dot} is also well-confirmed by the experimental tests we carried out with our small-scale prototype. In order to show the matching between our result and the experimental evidence, we compare the measured values of $\dot{\gamma}$, normalized by
$\left(1+\frac{1}{E_{eq}^2(t)}\right)^2\,|\vec{v}(t)|$, and the steering input $\delta(t)$: according to equation \eqref{eqn:GammaCtrlLaw1}, these two quantities are proportional through the gain given by $\frac{\rho\,C_L\,A}{2md_s}$. We show the matching between the experimental data and such a relationship in Fig.  \ref{F:power_vs_parameters}(a), (b) and (c), for a 6-m$^2$, a 9-m$^2$ and a 12-m$^2$ wing, respectively. The lumped parameters of the employed wings are resumed in Table \ref{T:wing_param}. In Fig. \ref{F:power_vs_parameters}(a)-(c), the gray dots represent experimental data collected in the whole range of $\theta,\,\phi$ spanned by the wing during operation, while the black dots represent values collected when $|\phi|\leq5^\circ$, i.e. in crosswind conditions. It can be noted that the linear relationship computed by using the lumped parameters as given by the model \eqref{eqn:GammaCtrlLaw} matches quite well with the experimental data, not only in crosswind conditions, where the underlying assumptions are valid, but also with larger values of $\phi$, in the range $\pm35^\circ$. Finally, in Fig. \ref{F:power_vs_parameters}-(d) we show an example of the matching between the time course of $\dot{\gamma}$ during figure-eight paths with the 6-m$^2$ wing and the estimate given by the model \eqref{eqn:gamma_model_form}-\eqref{eqn:GammaCtrlLaw}. Overall, the presented results provide a solid bridge between the existing dynamical models of the steering behavior of tethered wings and experimental evidence. In the next section, we introduce a new control approach for crosswind flight of tethered wings, which is based on the presented simplified model.
\begin{table}[!h]
  \centering
  \caption{Lumped parameters of the wings employed in the experimental activities.}\label{T:wing_param}
\begin{tabular}{|l|l|l|}
  \hline
  \multicolumn{3}{|l|}{Airush One$^\circledR$ 12}\highertop\\ \hline\highertop
  Area &$A$ &12$\,$m$^2$\\ \hline\highertop
  Mass &$m$ &2.9$\,$kg\\ \hline\highertop
  Wingspan &$d_s$ &3.1$\,$m\\ \hline\highertop
  Lift coefficient (average) &$C_L$ &0.85 \\ \hline\highertop
  Equivalent aerodynamic efficiency (average) &$E_{eq}$ &5.3\\ \hline\hline
   \multicolumn{3}{|l|}{Airush One$^\circledR$ 9}\highertop\\ \hline\highertop
  Area &$A$ &9$\,$m$^2$\\ \hline\highertop
  Mass &$m$ &2.45$\,$kg\\ \hline\highertop
  Wingspan &$d_s$ &2.7$\,$m\\ \hline\highertop
  Lift coefficient (average) &$C_L$ &0.8 \\ \hline\highertop
  Equivalent aerodynamic efficiency (average) &$E_{eq}$ &5.6\\ \hline\hline
   \multicolumn{3}{|l|}{Airush One$^\circledR$ 6}\highertop\\ \hline\highertop
  Area &$A$ &6$\,$m$^2$\\ \hline\highertop
  Mass &$m$ &1.7$\,$kg\\ \hline\highertop
  Wingspan &$d_s$ &1.8$\,$m\\ \hline\highertop
  Lift coefficient (average) &$C_L$ & 0.6\highertop\\ \hline\highertop
  Equivalent aerodynamic efficiency (average) &$E_{eq}$ &5.1\\ \hline
  \end{tabular}
\end{table}

\section{Control design}\label{S:control_design}
We propose a control scheme consisting of three nested loops, shown in Fig. \ref{F:control_scheme}.
The outer control loop employs the current wing position, in terms of $\theta,\,\phi$ angles, to compute a reference velocity angle, $\gamma_{\text{ref}}$, for the middle control loop.
\begin{figure*}[!phtb]
 \begin{center}
  \includegraphics[bbllx=9mm,bblly=203mm,bburx=273mm,bbury=257mm,width=18cm,clip]{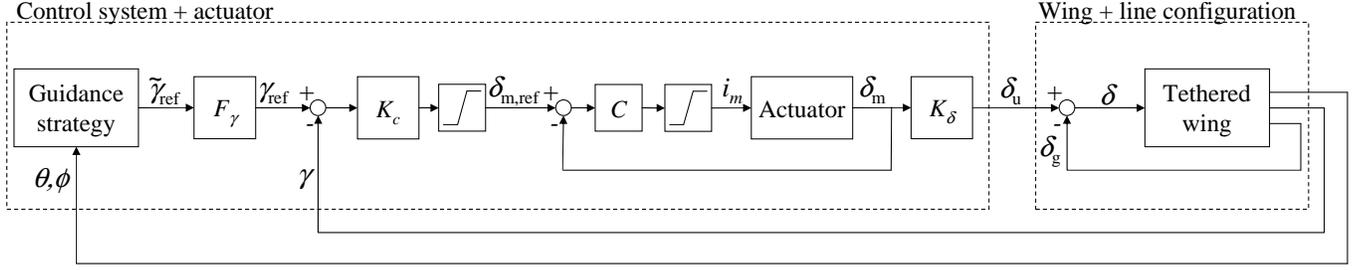}
  \caption{Overview of the proposed control system.}
  \label{F:control_scheme}
 \end{center}
\end{figure*}
The latter employs $\gamma$ as feedback variable and it has the objective of tracking $\gamma_{\text{ref}}$, by setting a suitable position reference, $\delta_{\text{m,ref}}$, for the actuator installed on the GU. The innermost control loop then employs a feedback of the motor position $\delta_\text{m}$ to command the motor's current $i_\text{m}$, in order to track the desired position $\delta_{\text{m,ref}}$. This control structure allows to separate the nonlinear part of the controller, which is all kept at the outermost level, from the linear one, hence obtaining two simple controllers for the middle and inner loops, for which we can carry out a theoretical robustness analysis. The control algorithm for the outer loop is given by a quite simple guidance strategy, hence making the whole control system very suitable for implementation and experimental testing. In the next sections, we describe each control loop in details. We first adopt a general notation for the involved design parameters and we then provide in section \ref{S:results} the specific numerical values employed in our tests.

\subsection{Position control loop}\label{SS:innermost}
The innermost loop consists of a standard position control system with an electrical DC brushed motor and a linear motion system, based on a lead screw mechanism. By neglecting high-order effects, in the absence of external disturbances the dynamics of the actuator can be modeled as:
\begin{equation}\label{eqn:motor_dyn}
\dfrac{\ddot{\delta}_\text{m}(t)}{\omega_\text{m}}=-\dot{\delta}_\text{m}(t)+K_\text{m}i_\text{m}(t),
\end{equation}
where $\delta_\text{m}(t)$ is the actuator's position in m, $K_\text{m},\,\omega_\text{m}$ are parameters depending on the actuator's characteristics and $i_\text{m}(t)$ is the commanded current in A. Moreover, the prototype is equipped with a series of pulleys such that, for a given value of $\delta_\text{m}$, the corresponding difference of length of the steering lines is equal to
\begin{equation}\label{eqn:mot_diff}
\delta_\text{u}(t)=K_\delta\,\delta_\text{m}(t),
\end{equation}
with $K_\delta$ being a constant gain. The value of $\delta_\text{m}$ is measured with high accuracy by an optical rotary incremental encoder, suitably scaled to obtain the linear position of the actuator from the motor's angular position. We use standard loop-shaping control design techniques (see e.g. \cite{SkPo05}) to design a cascade feedback controller $C$ for this control level. The control scheme is shown in Fig. \ref{F:control_scheme_innermost}.
\begin{figure}[!phtb]
 \begin{center}
  \includegraphics[bbllx=7mm,bblly=148mm,bburx=270mm,bbury=196mm,width=8.5cm,clip]{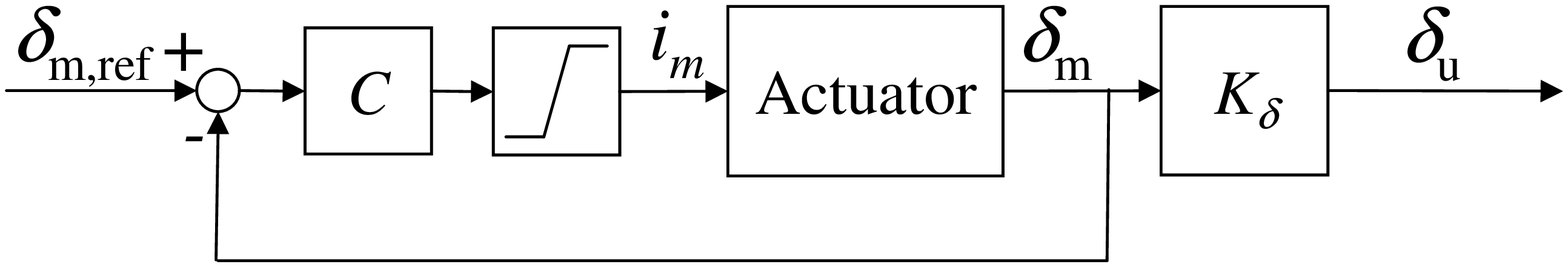}
  \caption{Scheme of the actuator position control loop.}
  \label{F:control_scheme_innermost}
 \end{center}
\end{figure}
We note that the actuator has  current limits of $\pm\overline{i}_\text{m}$, included as a saturation in Fig. \ref{F:control_scheme_innermost}. Such a saturation does not give rise to integrator windup problems, since the employed controller $C$ does not have integral action. Yet, the presence of an integrator in the plant \eqref{eqn:motor_dyn} still yields a closed-loop transfer function with unitary gain between the reference and the output. In particular, the closed loop system for the inner loop results to be of the form:
\begin{equation}\label{eqn:inner_loop_cl}
\ddot{\delta}_\text{m}(t)+2\zeta_\text{cl}\omega_\text{cl}\dot{\delta}_\text{m}(t)+\omega_\text{cl}^2\delta_\text{m}(t)=
\omega_\text{cl}^2\delta_\text{m,ref}(t),
\end{equation}
where $\delta_\text{m,ref}(t)$ is the reference position provided by the controller for the wing's velocity angle, which is described next.

\subsection{Velocity angle control loop}\label{SS:inner}
The control design for the middle loop exploits the control-oriented model \eqref{eqn:GammaCtrlLaw} derived in section \ref{S:control_model}. The control input is the actuator position $\delta_\text{m,ref}(t)$  (i.e. the reference for the innermost control loop described in section \ref{SS:innermost}), and the feedback variable is the wing's velocity angle $\gamma(t)$. The controller is given by a simple proportional law :
\begin{equation}\label{eqn:middle_loop_ctrl}
\delta_\text{m,ref}(t)=K_c\left(\gamma_\text{ref}(t)-\gamma(t)\right),
\end{equation}
where $\gamma_\text{ref}(t)$ is the target velocity angle provided by the outer control loop, and $K_c$ is a scalar gain to be chosen by the designer. The value of $K_\text{c}$ is the only design parameter for the velocity angle controller, and it can be tuned at first by using the  equations \eqref{eqn:gamma_model_form}-\eqref{eqn:GammaCtrlLaw} with some nominal system
parameters, and then via experiments. Moreover we note that, with the proposed approach, $K_c$ can be tuned in order to robustly stabilize the control loop. In particular, for a fixed value of $K_c$, by exploiting the model \eqref{eqn:GammaCtrlLaw} we can carry out a robustness analysis of the control system comprising the velocity angle loop and the actuator position loop. Such analysis allows us to discern for what range of wind speeds and wing's parameters the considered static gain $K_c$ is able to stabilize the velocity angle control system. In order to do so, we re-write the dynamical system given by equations \eqref{eqn:GammaCtrlLaw}-\eqref{eqn:mot_diff} and \eqref{eqn:inner_loop_cl}-\eqref{eqn:middle_loop_ctrl} in terms of velocity angle tracking error, $\Delta_\gamma(t)\doteq\gamma_\text{ref}-\gamma(t)$:\\
\begin{equation}\label{eqn:middle_loop_cl}\small
\left[
\begin{array}{c}
\dot{\Delta}_\gamma(t)\\
\dot{\delta}_\text{m}(t)\\
\ddot{\delta}_\text{m}(t)
\end{array}\right]=\underbrace{\left[
\begin{array}{ccc}
0&-\tilde{K}K_\delta&0\\
0&0&1\\
K_c\omega_\text{cl}^2&-\omega_\text{cl}^2&-2\zeta\omega_\text{cl}
\end{array}
\right]}\limits_{A_\text{cl}(\tilde{K})}
\left[
\begin{array}{c}
\Delta_\gamma(t)\\
\delta_\text{m}(t)\\
\dot{\delta}_\text{m}(t)
\end{array}\right]+w(t).
\end{equation}\normalsize
In \eqref{eqn:middle_loop_cl}, the term $\tilde{K}$ corresponds to the gain in \eqref{eqn:GammaCtrlLaw} and depends on the system's parameters as well as the wind and flight conditions. The term $w(t)$ accounts for the effects of gravity and apparent forces of \eqref{eqn:GammaCtrlLaw}, as well as the forces exerted by the lines on the innermost control system, and it can be seen as a bounded disturbance acting on the system. System \eqref{eqn:middle_loop_cl} has time-varying, uncertain linear dynamics characterized by the matrix $A_\text{cl}(\tilde{K})$. In fact, the scalar $\tilde{K}(|\vec{v}(t)|,A,E_{eq}(t),C_L(t),m,d_s)$ is a function of several parameters, like the wing's speed magnitude and the wing's efficiency and lift coefficient, that are not precisely known and vary over time, since for example the aerodynamic coefficients depend on the wing's angle of attack, which changes in time according to the flight conditions. However, upper and lower bounds for all of the involved parameters can be easily derived
on the basis of the available knowledge on the system, and these bounds can be employed to compute limits $\tilde{K}^i,\,i=1,\,2,$ such that $\tilde{K}\in[\tilde{K}^1,\,\tilde{K}^2]$. Then, any possible matrix $A_\text{cl}(\tilde{K})$ results to be contained in the convex hull defined by the vertices  $A_\text{cl}(\tilde{K}^1),\,A_\text{cl}(\tilde{K}^2)$. After computing these vertices, existing results based on quadratic stability and linear matrix inequalities (LMI) can be used to assess the robust stability of system \eqref{eqn:middle_loop_cl}. In particular, system \eqref{eqn:middle_loop_cl} results to be robustly stable if there exists a positive definite matrix $P=P^T\in\mathbb{R}^{3\times3}$ such that (see e.g. \cite{Amato06}):
\begin{equation}\label{eqn:stability_cond}
A_\text{cl}^T(\tilde{K}^i)P+PA_\text{cl}(\tilde{K}^i)\prec0,\,i=1,\,2,
\end{equation}
where $^T$ stands for the matrix transpose operation. Condition \eqref{eqn:stability_cond} can be easily checked by using an LMI solver. In section \ref{S:results} we show that indeed a unique value of $K_\text{c}$ guarantees robust stability of the velocity angle control loop for a wide range of operating conditions. Since the wing's velocity can be measured quite accurately, one could also adopt a gain scheduling approach and make the gain in \eqref{eqn:middle_loop_ctrl} depend on $|\vec{v}(t)|$, in order to improve the performance. However, in our experimental tests this was not needed, as the control systems achieved satisfactory performance for all the conditions of wind speed (and, consequently, of wing speeds) that we experienced, as well as with all the three different wings we tested.\\
Finally we remark that, due to the physical limitations of the actuator, the position $\delta_\text{m}(t)$ is constrained in the range $\pm\overline{\delta}_\text{m}$, where $\overline{\delta}_\text{m}$ is a positive scalar. Similarly, we saturate the reference position $\delta_\text{m,ref}(t)$ in the same range. However, we note that these saturations are never active during operation (we show this fact with measured data in the next section), hence the stability analysis reported above is still valid even if it does not account explicitly for the input limits. Fig. \ref{F:middle_loop} shows a scheme of the velocity angle controller.
\begin{figure}[!phtb]
 \begin{center}
  \includegraphics[bbllx=8mm,bblly=118mm,bburx=273mm,bbury=203mm,width=6cm,clip]{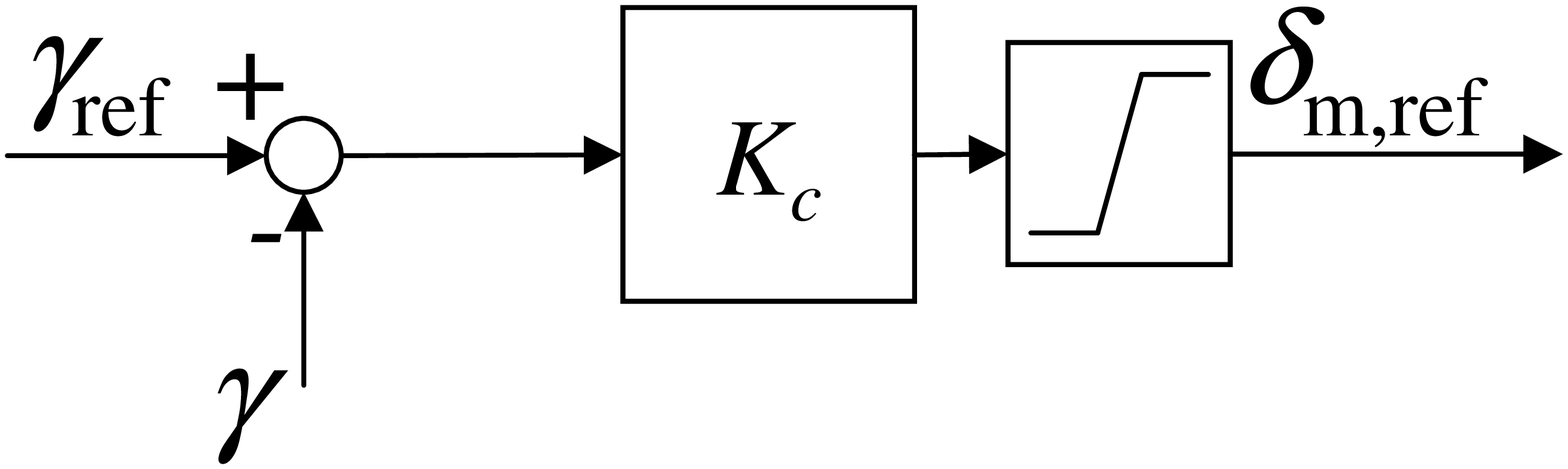}
  \caption{Scheme of the velocity angle controller.}
  \label{F:middle_loop}
 \end{center}
\end{figure}

\subsection{Outer control loop}\label{SS:outer}
The outer control loop (shown in Fig. \ref{F:outer_loop}) is responsible for providing the velocity angle control loop with a reference heading $\gamma_\text{ref}(t)$.
\begin{figure}[!phtb]
 \begin{center}
  \includegraphics[bbllx=38mm,bblly=182mm,bburx=123mm,bbury=237mm,width=5.5cm,clip]{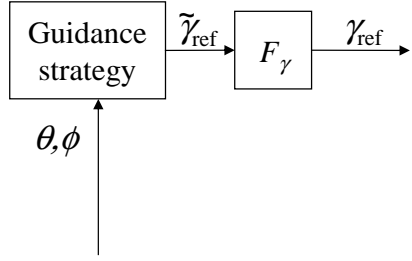}
  \caption{Scheme of the guidance strategy.}
  \label{F:outer_loop}
 \end{center}
\end{figure}
The goal is to have a control algorithm able to make the wing fly along figure-eight paths, with few tuning parameters. Hence, we want to avoid the need to pre-compute a whole trajectory to be used as reference, as it has been done in previous works \cite{IlHD07,BaOc12}. In fact, pre-computed reference paths are based on some mathematical model of the system, with the consequent unavoidable issues of model mismatch and approximation that might give rise to problems related to stability and attractiveness of the chosen trajectory. In \cite{ErSt12}, a  bang-bang like strategy to set the reference heading of the wing, avoiding the use of pre-computed reference flying paths, has been described. Here, we propose a similar approach, whose advantage is to provide a simple and explicit link between the tuning parameters and the position of the resulting paths in the wind window.\\
In our approach we define two fixed reference points in the ($\phi,\,\theta$) plane, denoted by $P_- = (\phi_-,\theta_-)$ and
$P_+ = (\phi_+,\theta_+)$, with $\phi_-<\phi_+$ (see Fig. \ref{fig:TargetPoints}).
\begin{figure}[htb]
\centerline{
  \includegraphics[bbllx=84mm,bblly=142mm,bburx=195mm,bbury=200mm,width=9cm,clip]{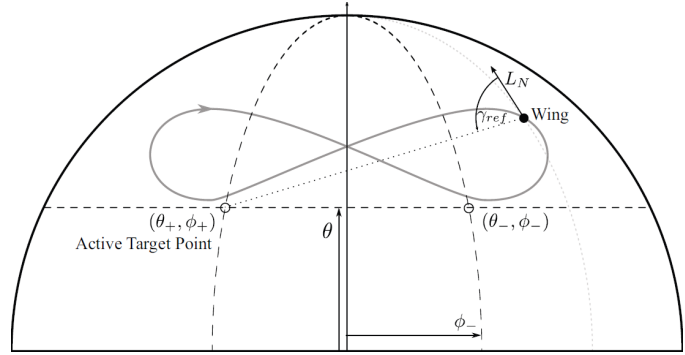}}
  \caption{Sketch of the control strategy for the outermost loop. Wind window projected on the $(Y,Z)$ plane (black solid line), target points $P_-,\,P_+$ (`$\circ$'), traces of points with constant $\theta$ and $\phi$ (dashed lines), example of wing's path (gray solid line), and example of how the reference $\gamma_\text{ref}$ is computed for a given wing position (`$\bullet$').}
\label{fig:TargetPoints}
\end{figure}
The controller computes a new value of the reference velocity angle at discrete time instants. At each time step $k\in\mathbb{Z}$, one of the two reference points is set as the active target $P_\text{a}(k)=(\phi_\text{a}(k),\,\theta_\text{a}(k))$, according to a switching strategy that we describe next. Then, a first target velocity angle $\tilde{\gamma}_\text{ref}(k)$ is computed on the basis of the measured values of $\theta(k)$ and
$\phi(k)$ as follows:
\begin{equation}
 \tilde{\gamma}_\text{ref}(k) = \arctan{\left(\frac{(\phi_\text{a}(k)-\phi(k))\cos{(\theta(k))}}{\theta_\text{a}(k) - \theta(k)}\right)},\label{eqn:GammaRef}
\end{equation}
i.e. in order to make the wing's velocity vector point towards the active target (compare the definition of $\gamma(t)$ in \eqref{eqn:Gamma}). Finally, the actual reference $\gamma_\text{ref}$ is computed as the output of a 2$^\text{nd}$-order Butterworth filter $F_\gamma$, whose input is $\tilde{\gamma}_\text{ref}$, with cutoff frequency $\omega_\gamma$, to be tuned by the control designer, in order to provide a smooth reference as input to the velocity angle control loop.
The target points are switched according to the following strategy:
\begin{equation}\label{eqn:switch}
\begin{array}{l}
 \text{If $\phi(k)<\phi_-$ then $P_\text{a}(k)=P_+$}\\ \text{If $\phi(k)>\phi_+$ then $P_\text{a}(k)=P_-$}\\
 \text{Else $P_\text{a}(k)=P_\text{a}(k-1)$}.
 \end{array}
\end{equation}
Thus, the target point is switched when the measured value of $\phi$ is outside the interval $[\phi_-,\,\phi_+]$. By how we defined the velocity angle $\gamma$, after the target point has been switched the wing will start turning, under the action of the inner control loops, following ``up-loops'', i.e. pointing first towards the zenith of the wind window and then to the other target point.
Equations \eqref{eqn:GammaRef}-\eqref{eqn:switch} provide the core of the controller we propose for setting the reference velocity angle. We note that also this control loop involves few parameters, i.e. the target points $P_-,\,P_+$, which can be intuitively tuned by using simplified equations of crosswind motion like those presented in \cite{FaMP11}, and the cutoff frequency $\omega_\gamma$, which can be tuned in order to have sharper (for higher $\omega_\gamma$) or wider (for smaller $\omega_\gamma$) turns of the wing (we provide an example obtained from our experiments in the next section). Finally, we note that the proposed control strategy does not rely on any measurement or estimate of the wind speed, however the interval $[\phi_-,\,\phi_+]$ shall
be centered around the wind direction, in order to make the wing fly in crosswind conditions. A rough measure of the wind direction can be obtained by means of standard wind vanes, moreover an estimate can be obtained from other measured quantities, like the wing's speed and line forces.

\section{Experimental results}\label{S:results}
We implemented the controller described in section \ref{S:control_design} on a real-time machine made by SpeedGoat$^\circledR$ and programmed
with the xPC Target$^\circledR$ toolbox for MatLab$^\circledR$. We employed three different wings in our tests: a 6$\,$m$^2$, a 9$\,$m$^2$ and a 12$\,$m$^2$ Airush$^\circledR$ One power kites. The lumped parameters of the wings are reported in Table \ref{T:wing_param}. The sampling frequencies we used for the control loops are 100 Hz for the innermost controller and 50 Hz for the middle and outermost ones. A movie of the experimental tests is available online \cite{Wing_movie}.
Table \ref{T:system_params} shows the main system's and control parameters, in addition to those reported in Table \ref{T:wing_param}.
\begin{table}[!h]
  \centering
  \caption{System and control parameters.}\label{T:system_params}
\begin{tabular}{|l|l|l|}
  \hline
  \multicolumn{3}{|l|}{System's parameters}\highertop\\ \hline\highertop
  Actuator gain &$K_\text{m}$& $0.73\,$m/(s\,A)\\ \hline\highertop
  Actuator pole & $\omega_\text{m}$ & $1.9\,$rad/s\\ \hline\highertop
    Position limits &$\overline{\delta}_\text{m}$& 0.35 m\\ \hline\highertop
      Current limits &$\overline{i}_\text{m}$& 10 A\\ \hline\highertop
  Distance between & &\\
  steering lines' attachment points &$d$ &0.5$\,$m\\ \hline\highertop
  Tether length  &$r$ &30$\,$m\\ \hline\highertop
  Gain between motor position&&\\
  and line lengths' difference& $\delta_\text{u}$&4\\ \hline\highertop
  Air density &$\rho$ &1.2$\,$kg/m$^3$\\ \hline\hline
  \multicolumn{3}{|l|}{Position control loop}\highertop\\ \hline\highertop
  Sampling frequency & &100 Hz \\ \hline\highertop
  Damping &$\zeta_\text{cl}$ &0.7 \\ \hline\highertop
  Natural frequency  &$\omega_\text{cl}$ &78$\,$rad/s\\ \hline
  \hline
  \multicolumn{3}{|l|}{Velocity angle control loop}\highertop\\ \hline\highertop
  Sampling frequency & &50 Hz \\ \hline\highertop
  Feedback gain&$K_c$ &0.046 m/rad\\ \hline
  \hline
  \multicolumn{3}{|l|}{Guidance strategy}\highertop\\ \hline\highertop
  Sampling frequency & &50 Hz \\ \hline\highertop
    Target points&$P_+,P_-$ &variable \\ \hline\highertop
  Butterworth filter&&\\
  cutoff frequency&$\omega_\gamma$ &variable \\ \hline
  \end{tabular}
\end{table}
In Table \ref{T:system_params}, the values of the target points  $P_+,\,P_-$ and of the cutoff frequency $\omega_\gamma$ are indicated as ``variable'' because we tested different values, in order to assess the influence of these tuning parameters on the obtained performance. The first set of results that we show (Figs. \ref{F:innermost_loop_perf}-\ref{fig:10LoopOverlay3D}) is related to the 9-m$^2$ wing; the target points' coordinates are  $\theta_+=\theta_-=0.35\,$rad, $\phi_-=-0.2\,$rad, $\phi_+=0.2\,$rad and the value of $\omega_\gamma$ is 0.25 Hz.\\
Fig. \ref{F:innermost_loop_perf}
\begin{figure}[!htb]
\begin{center}
  \includegraphics[bbllx=11mm,bblly=64mm,bburx=189mm,bbury=209mm,width=8cm,clip]{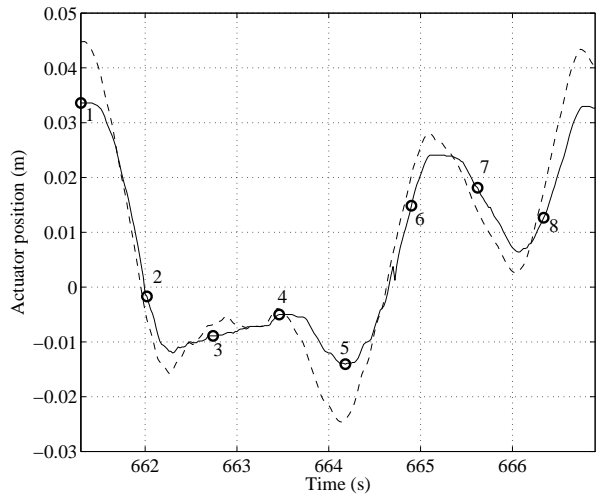}
  \caption{Experimental results. Example of  courses of target position $\delta_\text{m,ref}(t)$ (dashed line) and actual position
           $\delta_\text{m}(t)$ (solid line) of the actuator obtained with the designed innermost control loop,
           during automatic flight tests of a 9-m$^2$ wing. The numbered circles corresponds to the conditions highlighted in Fig. \ref{fig:PathForceGamma}.  Employed wing: 9 m$^2$. Guidance parameters: $\theta_+=\theta_-=0.35\,$rad, $\phi_-=-0.2\,$rad, $\phi_+=0.2\,$rad; $\omega_\gamma=0.25\,$Hz.}
\label{F:innermost_loop_perf}
\end{center}
\end{figure}
shows an example of the courses of $\delta_\text{m,ref}(t)$ and $\delta_\text{m}(t)$ obtained during the experiments with the employed
controller for the innermost control loop. Note that, for the sake of simplicity,  we did not include the presence of disturbances in the
description of this control loop in section \ref{SS:innermost}. Indeed the forces applied by the wing's lines on the actuator can be modeled
as an additive disturbance at this level.  The effect of such  disturbance can be clearly seen between points 8-1 and around point 5 in Fig. \ref{F:innermost_loop_perf}, where there is some error between the reference and actual position. By how the machine has been designed, the force exerted by each steering line on the actuator is equal to twice the force on the line. This amounts to approximately 800 N at the mentioned points in the figure. The tracking error induced by such disturbance does not influence much the performance of the overall system. If needed, the position controller can be tuned to achieve better tracking performance, at the cost of higher energy consumption. The flown path corresponding to  Fig. \ref{F:innermost_loop_perf} can be seen in Fig. \ref{fig:PathForceGamma},
\begin{figure}[htb]
 \begin{center}
  \includegraphics[bbllx=9mm,bblly=53mm,bburx=191mm,bbury=280mm,width=8cm,clip]{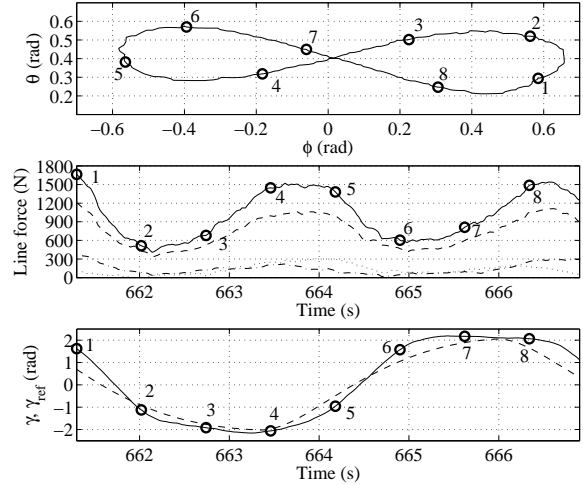}
  \caption{Experimental results. Single figure-eight path obtained during automatic test flights with about $2.4\,$m/s wind speed. From top to bottom: flying path in $(\phi,\theta)$ coordinates, course of the total force acting on the lines (solid line) and of the forces acting on the left (dotted), right (dash-dot) and center (dashed) lines, course of the velocity angle $\gamma$ (solid line) and reference velocity angle $\gamma_\text{ref}$ (dashed). Employed wing: 9 m$^2$. Guidance parameters: $\theta_+=\theta_-=0.35\,$rad, $\phi_-=-0.2\,$rad, $\phi_+=0.2\,$rad; $\omega_\gamma=0.25\,$Hz.}
  \label{fig:PathForceGamma}
 \end{center}
\end{figure}
as well as the line forces and the velocity angle of the wing.
In order to evaluate the robustness of the velocity angle control loop, we checked the condition \eqref{eqn:stability_cond} for the following ranges
of the involved parameters: $|\vec{v}|\in[2,\,80]\,$m/s, $E_{eq}\in[2,\,8]$, $C_L\in[0.4,\,1]$, $A\in[6,\,12]\,$m$^2$, $d_s\in[1.8,\,3.1]\,$m, $m\in[1.7,\,3]\,$kg. The considered ranges of wing speed and equivalent
efficiency correspond to wind speeds in the interval $[1,\,10]\,$m/s, which covers a wide range of normal operating conditions
for airborne wind energy generators. With the considered intervals, the control systems results to be stable with the chosen value of the
gain $K_\text{c}=0.046\,$m/rad, thus indicating a good robustness of the approach. Indeed in our tests the same value of $K_c$ was used for all wind conditions and all three wings, with good results. We remark that, as anticipated in section \ref{SS:inner},  the saturation $\overline{\delta}_\text{m}=0.35\,$m on the motor position was never active since, in practice, the commanded position is much lower, of the order of 0.1$\,$m (see Fig. \ref{F:innermost_loop_perf}). The wing can be effectively steered with such small values of control input thanks to the presence of the geometric input $\delta_\text{g}$, whose effect is to contribute to steer the wing towards the center of the wind-window, hence facilitating the desired up-loops figure eights. In other words, the geometric input already gives place to a ``self-steering'' behavior, and the aim of the controller is to apply relatively slight corrections to prevent instability and divergence of the flown paths, especially in the middle of the wind window, where the geometric input is small and the wing, without feedback control, would head straight towards the ground. This can be clearly seen by comparing the numbered points in Figs. \ref{F:innermost_loop_perf} and \ref{fig:PathForceGamma} (top). On the bottom of Fig. \ref{fig:PathForceGamma}, a typical
example of the courses of the reference $\gamma_\text{ref}(t)$ and of the actual velocity angle $\gamma(t)$ obtained during the experiments is shown: it can be
noted that the middle control loop achieves good performance in tracking the desired velocity angle. The proposed control approach has been successfully tested under various conditions. The controller was able to deal
with varying wind speed between about $2\,$m/s up to $6\,$m/s, achieving similar, consistent flight paths also in the presence of gusts. Wind speeds lower than 2$\,$m/s would not allow the employed wings to fly without stalling, and we avoided to test with wind speeds higher than 6$\,$m/s not to stress too much the wing itself, the lines and the mechanical frame, pulleys and other components of  the prototype  (we recall that the involved forces increase linearly with the square of the wind speed). Misalignments of the GU with respect to the wind direction up to
$30^\circ$ did not pose a problem for the overall control strategy. However, if the target points (and thus the figure-eight paths) are not centered
with respect to the wind, the flight trajectory becomes slightly asymmetric in terms of altitude. In Fig. \ref{fig:10LoopOverlay}, this phenomenon can be seen in the $(\phi,\theta)$ plane. The forces are larger on one side of the loop and the wing tends to gain more altitude during these turns. Fig. \ref{fig:10LoopOverlay3D}
\begin{figure}[htb]
 \begin{center}
  \includegraphics[bbllx=13mm,bblly=68mm,bburx=195mm,bbury=205mm,width=8cm,clip]{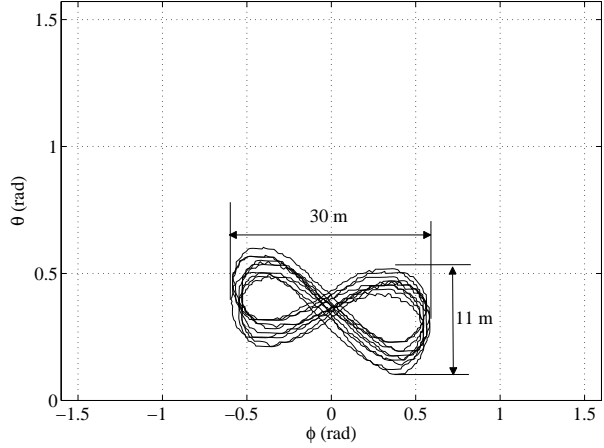}
  \caption{Experimental results. Ten consecutive figure-eight paths in the $(\phi,\theta)$ plane, and corresponding linear dimensions with the employed line length of 30 m. Employed wing: 9 m$^2$. Guidance parameters: $\theta_+=\theta_-=0.35\,$rad, $\phi_-=-0.2\,$rad, $\phi_+=0.2\,$rad; $\omega_\gamma=0.25\,$Hz.}
  \label{fig:10LoopOverlay}
 \end{center}
\end{figure}
\begin{figure}[htb]
 \begin{center}
  \includegraphics[bbllx=15mm,bblly=74mm,bburx=202mm,bbury=198mm,width=8cm,clip]{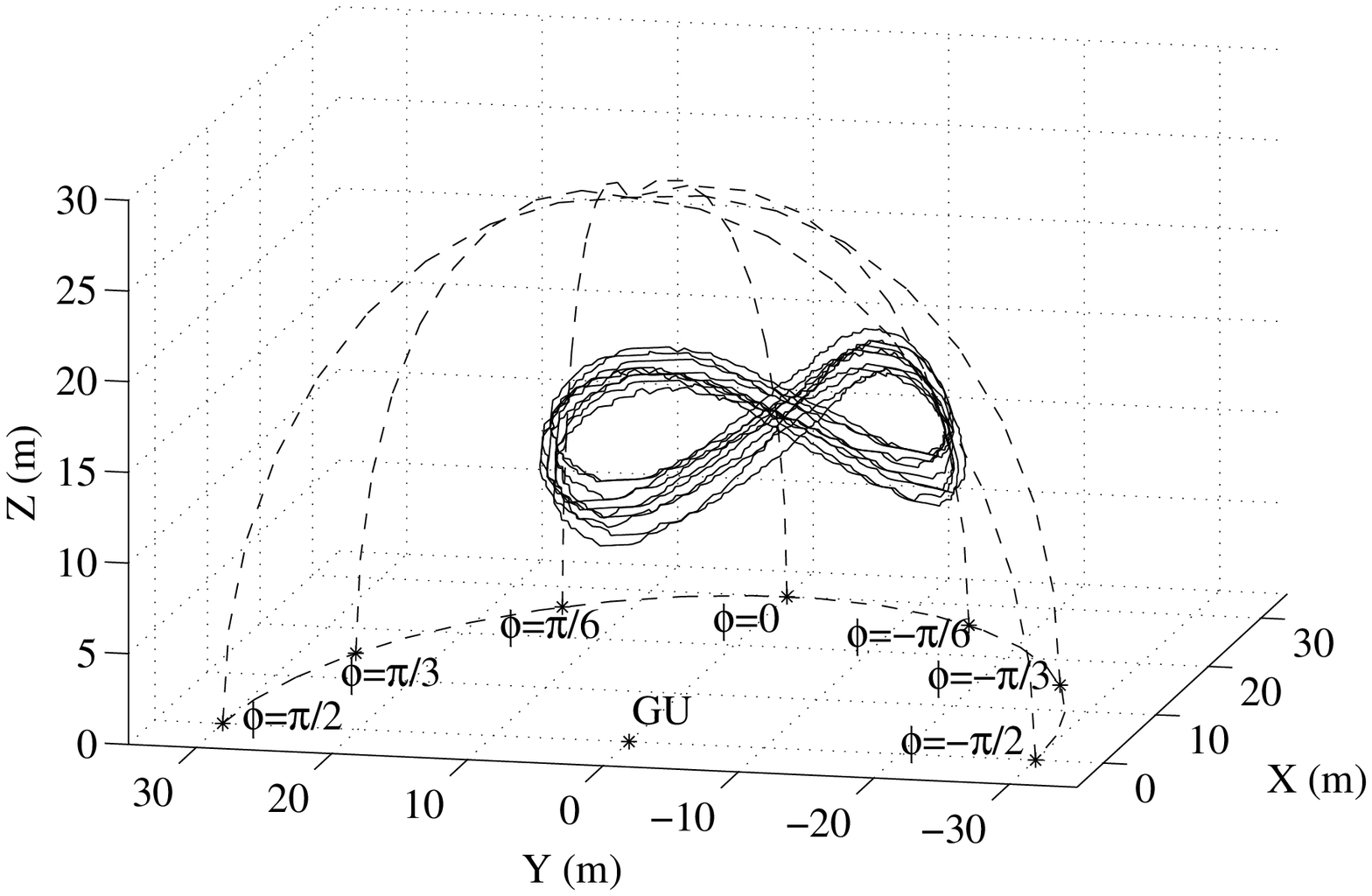}
  \caption{Experimental results. Ten consecutive figure-eight paths in the $(X,Y,Z)$ frame (solid lines). The dashed lines represent the wind window projected on the $(X,Y)$ plane as well as points with constant $\phi$ values and $\theta$ spanning the interval $[0,\pi/2]$. Employed wing: 9 m$^2$. Guidance parameters: $\theta_+=\theta_-=0.35\,$rad, $\phi_-=-0.2\,$rad, $\phi_+=0.2\,$rad; $\omega_\gamma=0.25\,$Hz.}
  \label{fig:10LoopOverlay3D}
 \end{center}
\end{figure}
shows the same paths as Fig. \ref{fig:10LoopOverlay}, but in the $(X,Y,Z)$ frame. Fig. \ref{fig:PathDiffWind} shows the results obtained with different wind speeds, again with the 9-m$^2$ wing and target points set to $\theta_+=\theta_-=0.8,\,$rad, $\phi_-=-0.4\,$rad, $\phi_+=0.4\,$rad.
\begin{figure}[htb]
 \begin{center}
  \includegraphics[bbllx=13mm,bblly=68mm,bburx=195mm,bbury=205mm,width=8cm,clip]{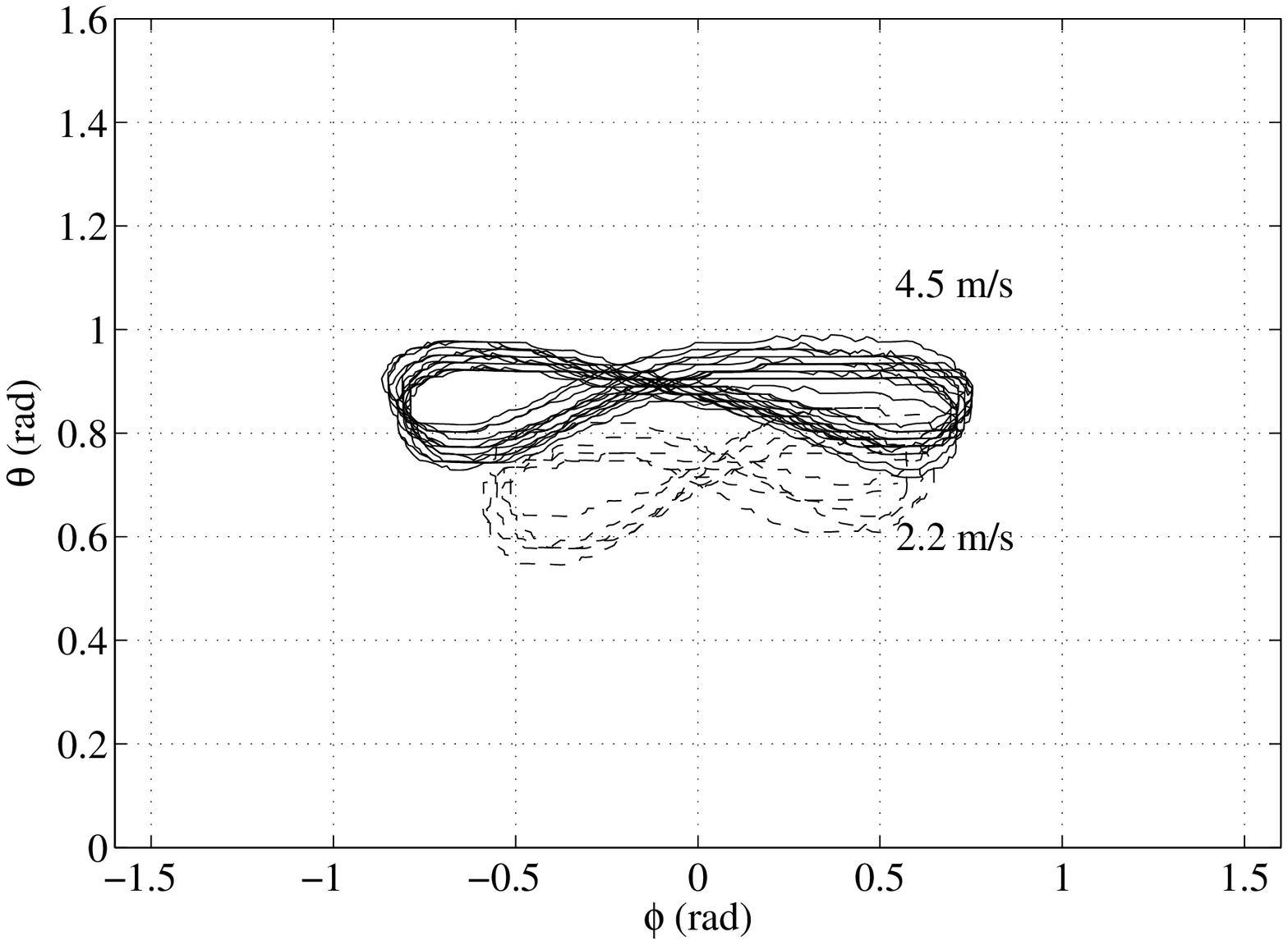}
  \caption{Experimental results.  Figure-eight paths flown with different wind speeds. The solid line corresponds to a series of consecutive figure-eights flown with average wind speed of $4.5\,$m/s, the dashed line  with average wind speed of $2.2\,$m/s. Employed wing: 9 m$^2$. Guidance parameters: $\theta_+=\theta_-=0.8\,$rad, $\phi_-=-0.4\,$rad, $\phi_+=0.4\,$rad; $\omega_\gamma=0.25\,$Hz.}
  \label{fig:PathDiffWind}
 \end{center}
\end{figure}
The  results of Fig. \ref{fig:PathDiffWind} highlight the fact that, since the proposed controller does not aim to track a given, specific reference path, the resulting figure-eight trajectories change with different wind conditions. However, the results also indicate that such changes are not dramatic, moreover the flown paths can be intuitively adjusted by changing the position of the target points $P_-,\,P_+$ employed in our control approach.\\
As mentioned in section \ref{SS:outer}, the cutoff frequency $\omega_\gamma$ can be used to influence the course of the reference velocity angle. The higher this frequency, the faster the transient behavior of $\gamma_\text{ref}(t)$ when the target point is switched, with consequent sharper turns and smaller flown paths: an example can be seen in Fig. \ref{fig:PathDiffGammaFilt}, where the target points are set to $\theta_+=\theta_-=0.55,\,\phi_-=-0.2,\,\phi_+=0.2$ and two different values of $\omega_\gamma$ are considered.
\begin{figure}[htb]
 \begin{center}
  \includegraphics[bbllx=13mm,bblly=68mm,bburx=195mm,bbury=205mm,width=8cm,clip]{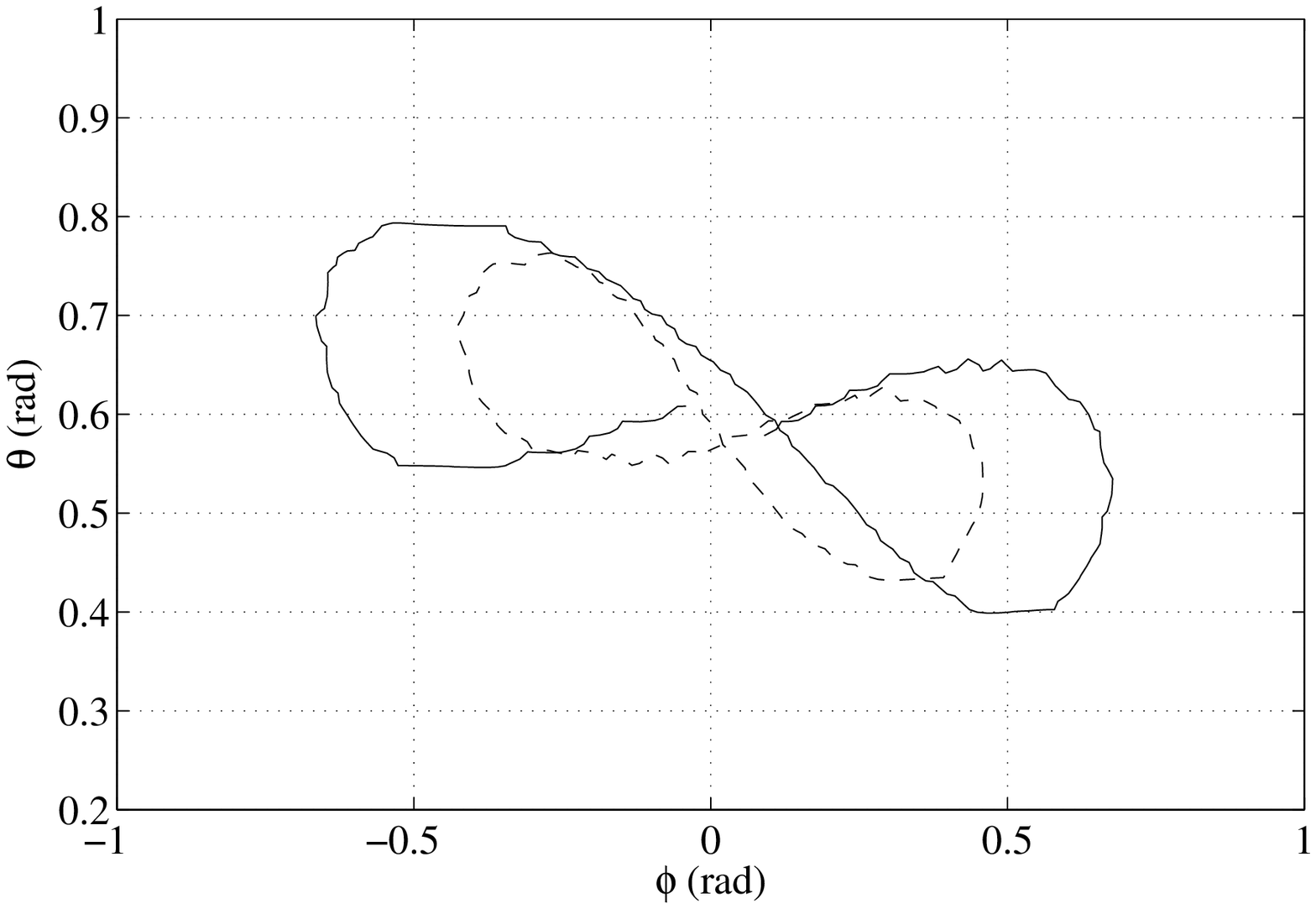}
  \caption{Experimental results. Figure-eight paths obtained with different values of the cutoff frequency $\omega_\gamma$. The solid line corresponds to $\omega_\gamma=0.25\,$Hz, the dashed one to $\omega_\gamma=1\,$Hz.  Employed wing: 9 m$^2$. Guidance parameters: $\theta_+=\theta_-=0.55\,$rad, $\phi_-=-0.2\,$rad, $\phi_+=0.2\,$rad.}
  \label{fig:PathDiffGammaFilt}
 \end{center}
\end{figure}
Figs. \ref{fig:PathForceGamma12} and \ref{fig:PathForceGamma6} show an analysis of typical figure-eight paths obtained with the 12-m$^2$ and with the 6-m$^2$ wings, respectively: while the qualitative results are similar to those obtained with the 9-m$^2$ wing, these Figures show the quantitative differences in terms of generated forces and behavior of the velocity angle. Finally, we present a comparison of the forces generated during extensive experimental tests with the different wings, and the corresponding theoretical values obtained e.g. from the results of \cite{FaMP11}. In particular, the generated force is expected to be affine in the quantity $E_{eq}\left(1+\frac{1}{E_{eq}^2}\right)^{3/2}\left(\,\cos{\theta}\cos(\phi)|\vec{W}|\right)^2$, with the gain being equal to $\frac{\rho\,C_L\,A}{2}$. The obtained results, shown in Figs. \ref{fig:Force}(a)-(c), indicate that the qualitative behavior is indeed consistent with the theory, albeit with some variability. The latter is mainly due to the uncertainty in the measurement of the wind speed at the wing's height and in the estimate of the lift coefficient and equivalent efficiency.
\begin{figure}[htb]
 \begin{center}
  \includegraphics[bbllx=7mm,bblly=43mm,bburx=200mm,bbury=236mm,width=8cm,clip]{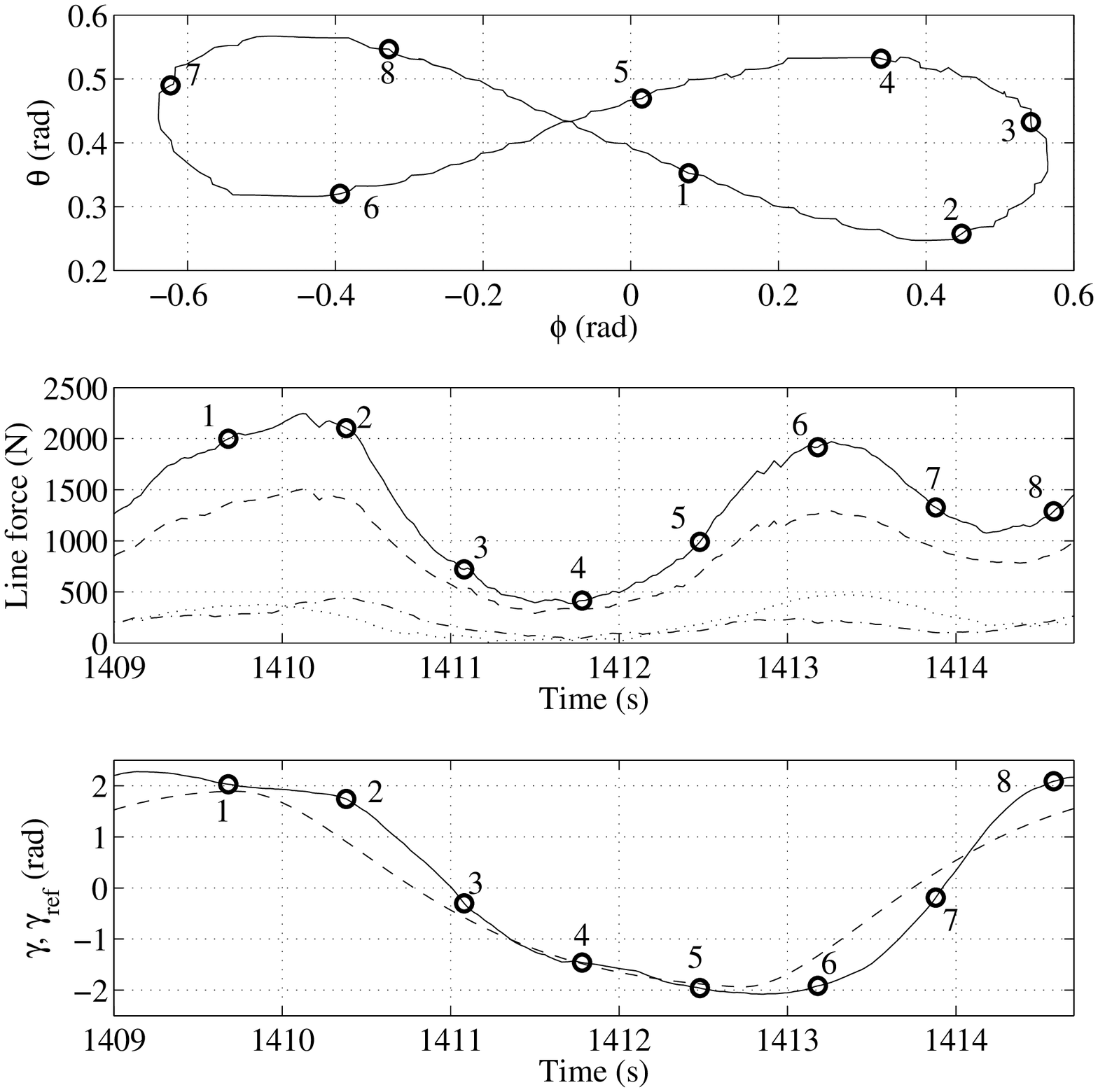}
  \caption{Experimental results. Single figure-eight path obtained during automatic test flights with about $3.1\,$m/s wind speed. From top to bottom: flying path in $(\phi,\theta)$ coordinates, course of the total force acting on the lines (solid line) and of the forces acting on the left (dotted), right (dash-dot) and center (dashed) lines, course of the velocity angle $\gamma$ (solid line) and reference velocity angle $\gamma_\text{ref}$ (dashed). Employed wing: 12 m$^2$. Guidance parameters: $\theta_+=\theta_-=0.4\,$rad, $\phi_-=-0.2\,$rad, $\phi_+=0.2\,$rad; $\omega_\gamma=0.25\,$Hz.}
  \label{fig:PathForceGamma12}
 \end{center}
\end{figure}
\begin{figure}[htb]
 \begin{center}
  \includegraphics[bbllx=10mm,bblly=43mm,bburx=200mm,bbury=236mm,width=8cm,clip]{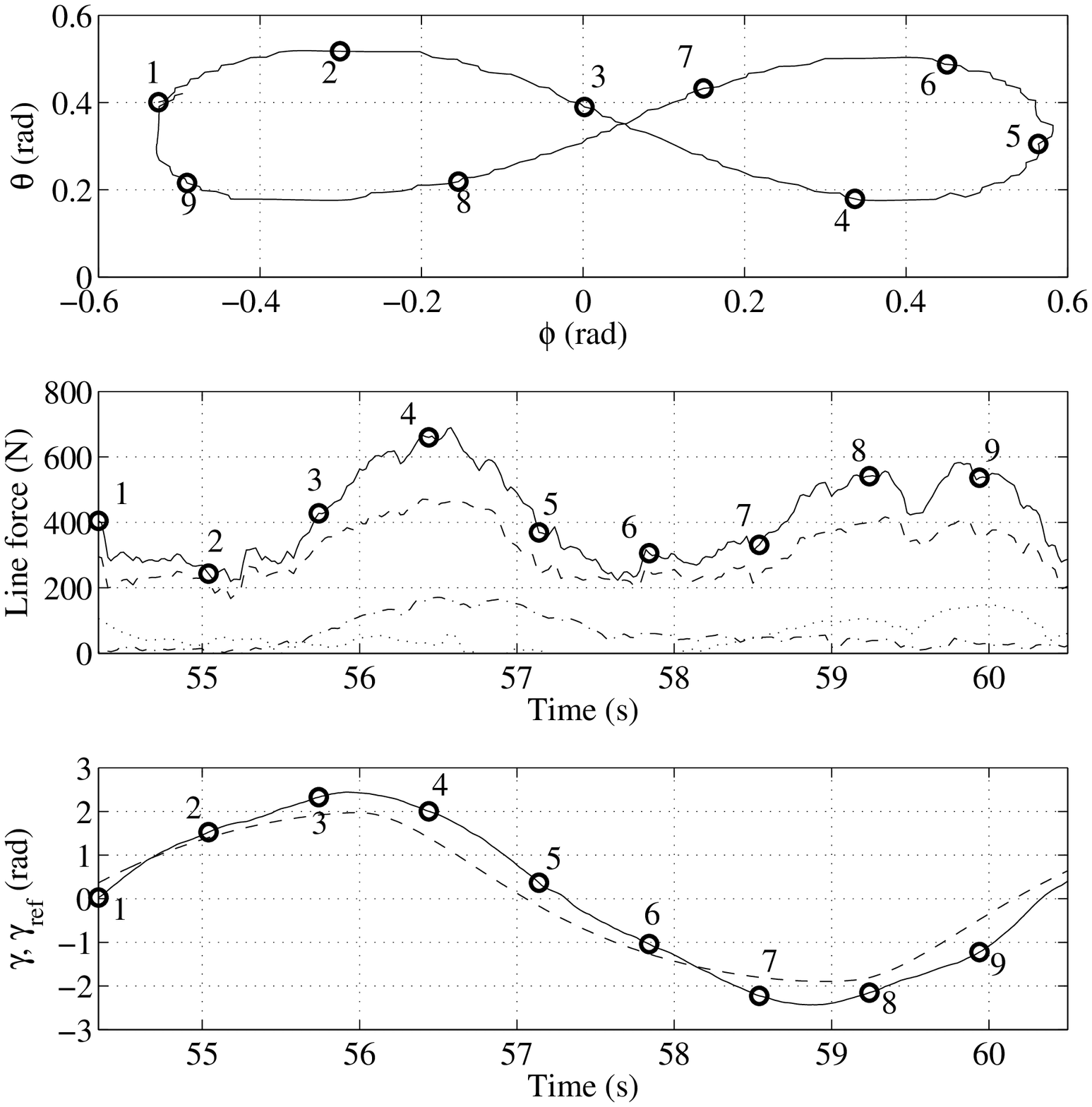}
  \caption{Experimental results. Single figure-eight path obtained during automatic test flights with about $2.2\,$m/s wind speed. From top to bottom: flying path in $(\phi,\theta)$ coordinates, course of the total force acting on the lines (solid line) and of the forces acting on the left (dotted), right (dash-dot) and center (dashed) lines, course of the velocity angle $\gamma$ (solid line) and reference velocity angle $\gamma_\text{ref}$ (dashed). Employed wing: 6 m$^2$. Guidance parameters: $\theta_+=\theta_-=0.35\,$rad, $\phi_-=-0.2\,$rad, $\phi_+=0.2\,$rad; $\omega_\gamma=0.25\,$Hz.}
  \label{fig:PathForceGamma6}
 \end{center}
\end{figure}
\begin{figure}[!hbt]
\centerline{
\begin{tabular}{c}
(a)\\
\includegraphics[bbllx=17mm,bblly=58mm,bburx=200mm,bbury=203mm,width=7cm,clip]{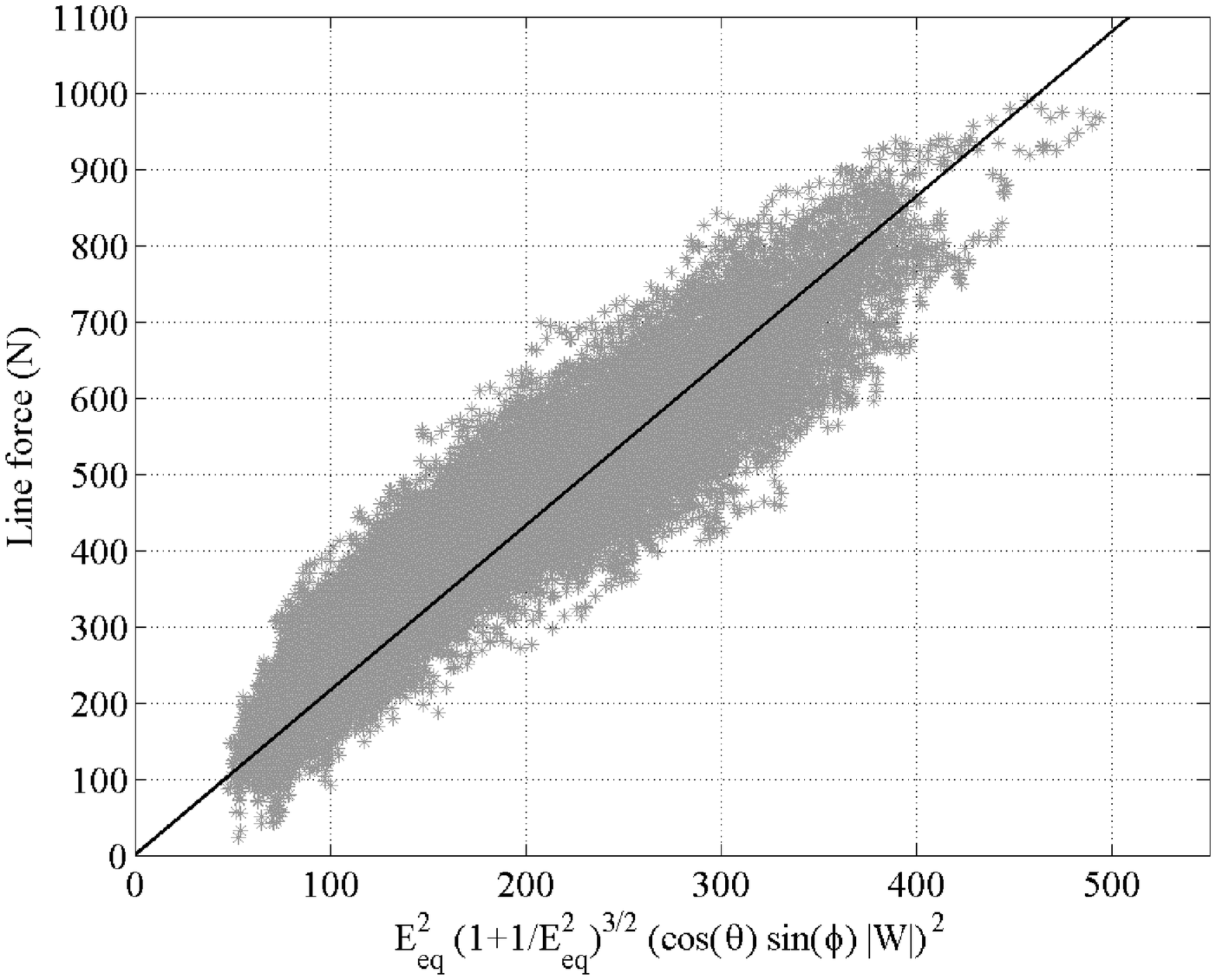}\\
(b)\\
\includegraphics[bbllx=17mm,bblly=58mm,bburx=200mm,bbury=203mm,width=7cm,clip]{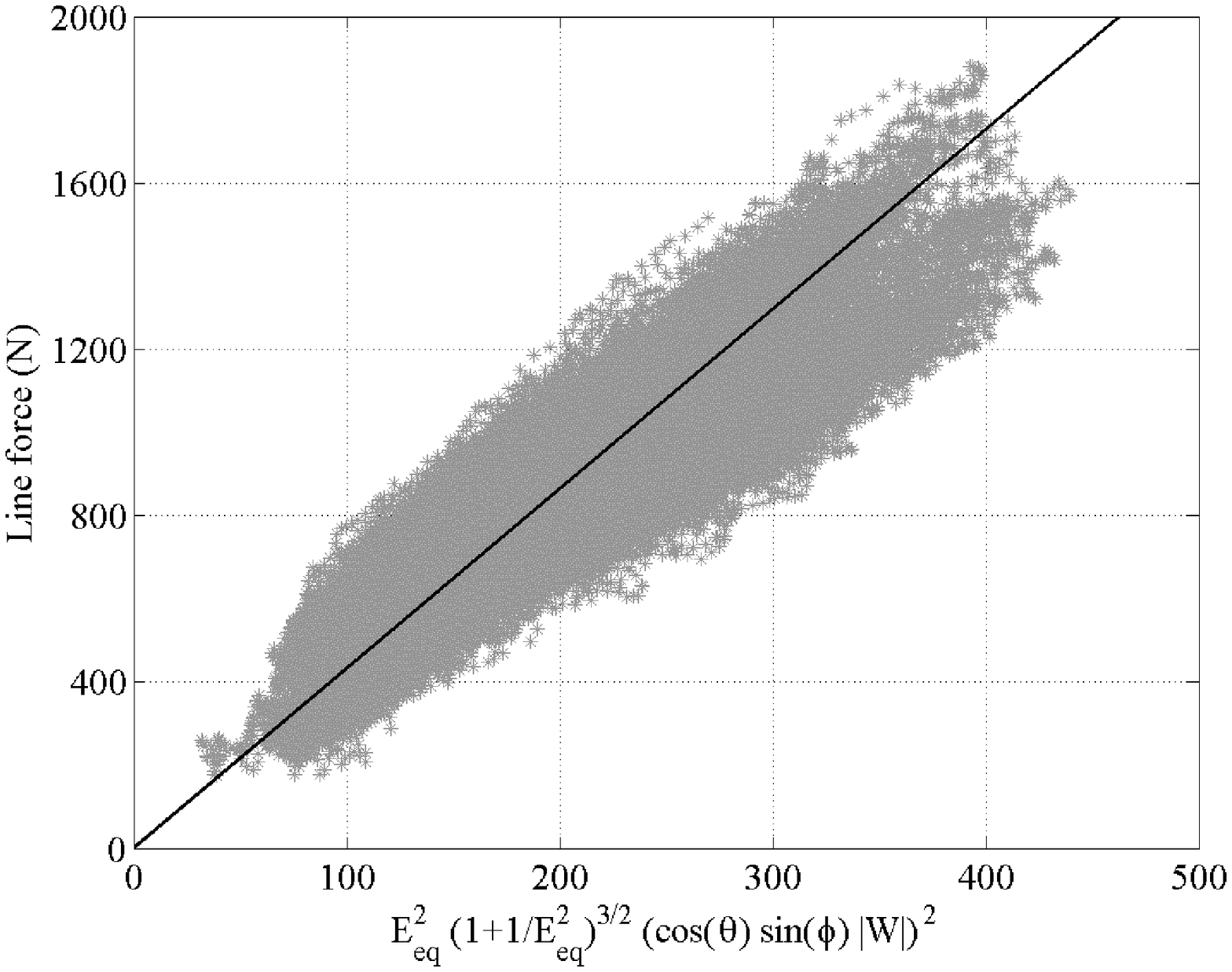}\\
(c)\\
\includegraphics[bbllx=17mm,bblly=58mm,bburx=200mm,bbury=203mm,width=7cm,clip]{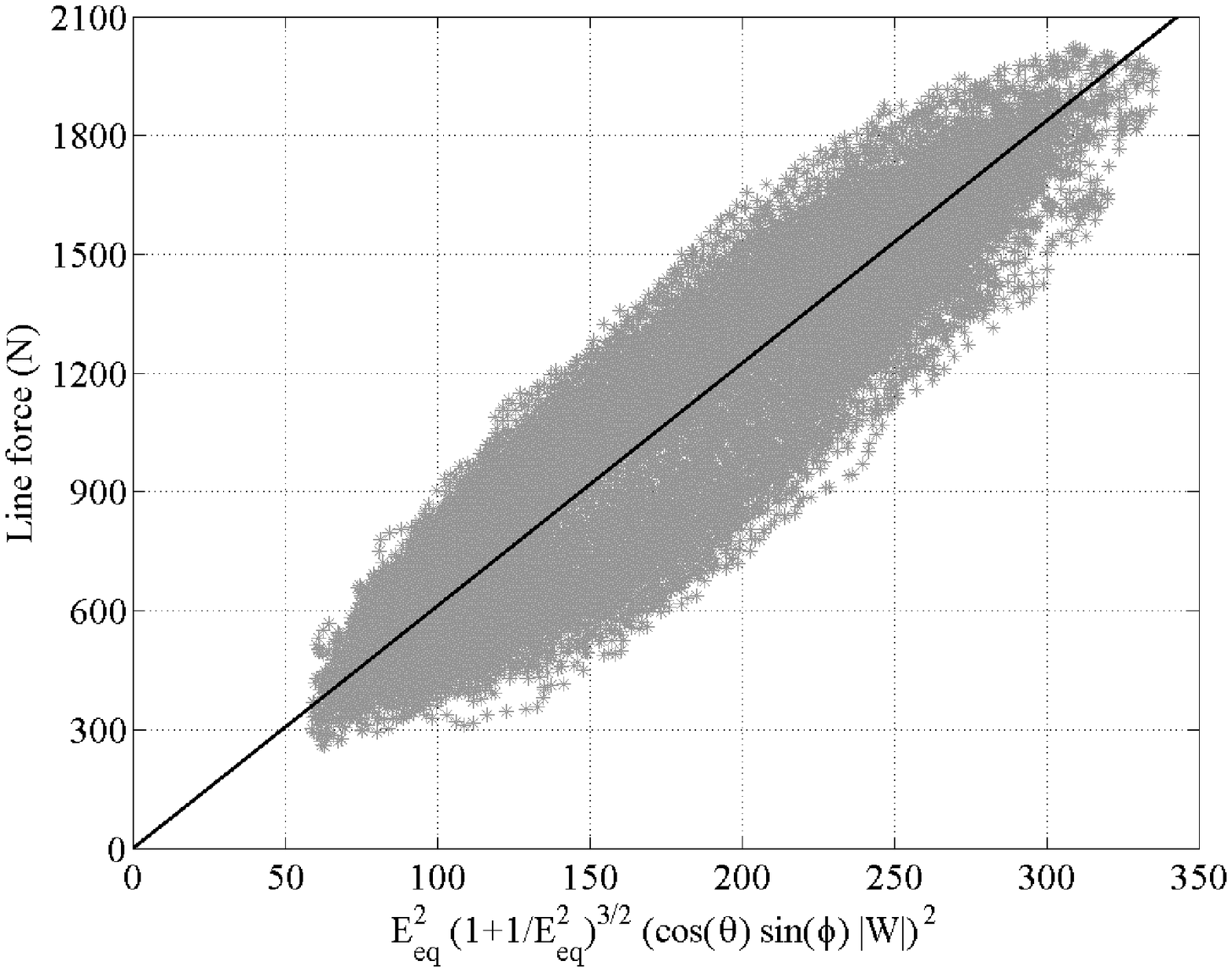}\\
\end{tabular}} \caption{Experimental results. Comparison between the measured values of total line force as a function of the quantity $E_{eq}\left(1+\frac{1}{E_{eq}^2}\right)^{3/2}\left(\,\cos{\theta}\cos(\phi)|\vec{W}|\right)^2$ (gray dots) and the theoretical linear relationship given by the gain $\frac{\rho\,C_L\,A}{2}$ as per the results of \cite{FaMP11} (solid lines).  (a) Airush One$^\circledR$ 6 kite, (b) Airush One$^\circledR$ 9 kite, (a) Airush One$^\circledR$ 12 kite. The lumped parameters for the kites are reported in Table \ref{T:wing_param}.} \label{fig:Force}
\end{figure}

\section{Conclusions}\label{S:conclusion}
A simplified model for the steering behavior of tethered wings in crosswind flight has been derived, and  its validity has been assessed with experimental data, collected with three different wings. Then, an approach to design a feedback controller for tethered wings has been proposed, with the aim to obtain figure-eight crosswind flying paths, to be used in airborne wind energy generators. The controller features three hierarchical levels. Differently from existing approaches in the literature, neither the measurement of the wind speed at the wing's altitude nor that of the effective wind speed are required. The control system involves few  parameters, that can be easily tuned. Moreover, a robustness analysis of the inner control loop has been carried out. The effectiveness of the approach has been shown through extensive experimental results obtained with a small scale prototype and different wings. Finally, a comparison between the forces measured during the tests and the ones predicted by the existing theoretical results has been carried out, showing a general qualitative consistency between theory and experiments.\\
The natural development of this research will be the execution of full generating cycles with the considered concept of airborne wind generator, and the comparison of the obtained results with the existing theoretical and numerical analyses. In order to achieve this goal, an additional motor/generator has to be used, and the related control algorithms have to be designed. These tasks could not be pursued with the considered prototype, which does not have energy generation capability, and they are subject of current research.

\appendix
\emph{Proof of Proposition \ref{P:gamma_dot}. }\small
For the sake of simplicity of notation, in the following we drop the dependance of time-varying variables from $t$. We start by deriving an analytic expression for the velocity angle rate. For this we take the derivative of (\ref{eqn:Gamma}) to get
\begin{equation}
 \dot{\gamma}= \dfrac{\cos{(\theta)}\,\dot{\theta}\ddot{\phi}-\sin{(\theta)}\,\dot{\phi}\dot{\theta}^2-\cos{(\theta)}\,\dot{\phi}\ddot{\theta}}				     {\cos{(\theta)}^2\,\dot{\phi}^2+\dot{\theta}^2}\label{eqn:GammaDot}
\end{equation}
The accelerations $\ddot{\theta}$ and $\ddot{\phi}$ in \eqref{eqn:GammaDot} can be expressed as functions of the forces acting on the wing by using the model equations \eqref{eqn:EqnMot1}-\eqref{eqn:Beta}.
By Assumption \ref{as:Crosswind}, neglecting all forces except for the aerodynamic ones and considering the balance of the lift and drag forces in the direction of the wing velocity $\vec{v}$ we have (see e.g. \cite{FaMP11}):
\begin{equation}
 \frac{\sin{\left(\Delta\alpha\right)}}{\cos{\left(\Delta\alpha\right)}} = \frac{1}{E_{eq}}\doteq\frac{C_{D,eq}}{C_L}\label{eqn:AOAcomplex}
\end{equation}
where $E_{eq}$ is the equivalent efficiency of the wing.
By the equation above we can see that $\Delta\alpha$ is small for a reasonable wing efficiency of 4-6. Measurements of test flights
have shown that $\Delta\alpha < 0.3\,$rad, and most of times $\Delta\alpha < 0.2\,$rad. Therefore we can linearize \eqref{eqn:AOAcomplex} to get
\begin{equation}
 \Delta\alpha = \frac{1}{E_{eq}}.\label{eqn:AOAsimple}
\end{equation}
Moreover, on the basis of Assumption \ref{as:SmallInput} we can simplify also $\eta$ in \eqref{eqn:eta} as $\eta = \Delta\alpha\psi$. Finally, by Assumption \ref{as:Crosswind} the component of absolute wind perpendicular to the tether is zero, and the velocity angle $\gamma\simeq\xi$.
Hence,  the aerodynamic force components given by \eqref{E:aero_force} and \eqref{E:aero_vectors} can be simplified as follows:
\begin{equation}\label{eqn:FaerNED}\begin{array}{r}
{}_L\vec{F}_\text{a}=\frac{1}{2}\rho C_LA|\vec{W}_e|^2\begin{pmatrix}
                          \Delta\alpha\cos{(\gamma)}-(\Delta\alpha^2\psi+\psi)\sin{(\gamma)}\\
	                  \Delta\alpha\sin{(\gamma)}+(\Delta\alpha^2\psi+\psi)\cos{(\gamma)}\\
	                   -1
	                 \end{pmatrix}+\\\frac{1}{2}\rho C_{D,eq}
                               A|\vec{W}_e|^2\begin{pmatrix}
		          -\cos{(\gamma)}\\
		          -\sin{(\gamma)}\\
		          -\Delta\alpha
	                 \end{pmatrix}
\end{array}
\end{equation}
By using (\ref{eqn:GammaDot}) and \eqref{eqn:EqnMot1}-\eqref{eqn:EqnMot2} with (\ref{eqn:FaerNED}) we obtain:
\begin{equation}\label{eqn:GammaDotSimple}
\begin{array}{r}
 \dot{\gamma} = \dfrac{\frac{1}{2}\rho C_LA|\vec{W}_e|^2}{rm}\dfrac{\dot{\theta}\cos{\gamma}+\cos{\theta}\dot{\phi}\sin{\gamma}}{\cos{\theta}^2\dot{\phi}^2+\dot{\theta}^2}
                        \left(\Delta\alpha^2+1\right)\psi\\
               +\dfrac{\frac{1}{2}\rho A|\vec{W}_e|^2\left( C_L
               \Delta\alpha-
               C_{D,eq}\right)\left(\dot{\theta}\sin{\gamma}-\cos{\theta}\dot{\phi}\cos{\gamma}\right)}
                       {rm\left(\cos{\theta}^2\dot{\phi}^2+\dot{\theta}^2\right)}\\
               +\dfrac{rm\sin{\theta}\dot{\phi}\left(\dot{\theta}^2+\cos{\theta}^2\dot{\phi}^2\right)}
                       {rm\left(\cos{\theta}^2\dot{\phi}^2+\dot{\theta}^2\right)}
              +\dfrac{g\cos{\theta}^2\dot{\phi}}{r\left(\cos{\theta}^2\dot{\phi}^2+\dot{\theta}^2\right)}
\end{array}
\end{equation}
Following the same argument by which \eqref{eqn:AOAsimple} is derived, we also obtain:
\begin{equation}\label{eqn:v_p_eff}
|\vec{W}_e(t)|^2\simeq \left(1+\frac{1}{E_{eq}^2}\right)^2|\vec{v}|^2.
\end{equation}
Considering \eqref{eqn:vP}, $|\vec{v}|$ is computed as:
\begin{equation}
 |\vec{v}|= \sqrt{r^2(\cos{(\theta)}^2\dot{\phi}^2+\dot{\theta}^2)},\label{eqn:vPP}
\end{equation}
which is, by the definition of $\gamma$ \eqref{eqn:Gamma}, also equal to
\begin{equation}
|\vec{v}|= r\dot{\theta}\cos{(\gamma)}+r\cos{(\theta)}\,\dot{\phi}\sin{(\gamma)}.\label{eqn:VPPgam}
\end{equation}
Finally, by combining (\ref{eqn:GammaDotSimple}) with (\ref{eqn:AOAsimple}) and \eqref{eqn:v_p_eff}-\eqref{eqn:VPPgam}, and considering the linearization of \eqref{eqn:psi} by Assumption \ref{as:SmallInput}, we get our result:
%
\begin{equation}\begin{array}{rcl}
 \dot{\gamma} &=&  \frac{C_L\rho A}{2md_s}\left(1+\frac{1}{E_{eq}^2}\right)^2|\vec{v}|\delta
                  +\frac{g\cos{\theta}\sin{\gamma}}{|\vec{v}|}+\sin{\theta}\dot{\phi}.\nonumber
\end{array}
\end{equation}

\bibliographystyle{IEEEtran}

\begin{thebibliography}{10}

\bibitem{ampyx}
Ampyx power website, http://www.ampyxpower.com/.

\bibitem{windlift}
Windlift website, http://www.windlift.com/.

\bibitem{kitenergy}
Kitenergy website, http://www.kitenergy.net/.

\bibitem{enerkite}
Enerjite Gmbh website, http://www.enerkite.com/.

\bibitem{Wing_movie}
EISG project ``Autonomous flexible wings for high-altitude wind energy
  generation'', experimental test movie, August 2012. Available on--line:
  http://lorenzofagiano.altervista.org/movies/EISG\_UCSB\_auto\_wing.mp4.

\bibitem{Amato06}
F.~Amato.
\newblock Quadratic stability.
\newblock In {\em Robust Control of Linear Systems Subject to Uncertain
  Time-Varying Parameters}, volume 325 of {\em Lecture Notes in Control and
  Information Sciences}, pages 31--92. Springer Berlin / Heidelberg, 2006.

\bibitem{ArRS09}
I.~Argatov, P.~Rautakorpi, and R.~Silvennoinen.
\newblock Estimation of the mechanical energy output of the kite wind
  generator.
\newblock {\em Renewable Energy}, 34:1525, 2009.

\bibitem{BaOc12}
J.~H. Baayen and W.~J. Ockels.
\newblock Tracking control with adaption of kites.
\newblock {\em IET Control Theory and Applications}, 6(2):182--191, 2012.

\bibitem{CaFM07}
M.~Canale, L.~Fagiano, and M.~Milanese.
\newblock Power kites for wind energy generation.
\newblock {\em IEEE Control Systems Magazine}, 27(6):25--38, December 2007.

\bibitem{CaFM09c}
M.~Canale, L.~Fagiano, and M.~Milanese.
\newblock High altitude wind energy generation using controlled power kites.
\newblock {\em IEEE Transactions on Control Systems Technology}, 18(2):279
  --293, mar. 2010.

\bibitem{ErSt12}
Michael Erhard and Hans Strauch.
\newblock Control of towing kites for seagoing vessels.
\newblock {\em arXiv}, abs/1202.3641, 2012.

\bibitem{FHBK12_arxiv}
L.~Fagiano, K.~Huynh, B.~Bamieh, and M.~Khammash.
\newblock On sensor fusion for airborne wind energy systems.
\newblock {\em arXiv}, 1211.5060, 2012.

\bibitem{FaMi12}
L.~Fagiano and M.~Milanese.
\newblock Airborne wind energy: an overview.
\newblock In {\em American Control Conference 2012}, pages 3132--3143,
  Montreal, Canada, 2012.

\bibitem{FaMP09}
L.~Fagiano, M.~Milanese, and D.~Piga.
\newblock High-altitude wind power generation.
\newblock {\em IEEE Transactions on Energy Conversion}, 25(1):168 --180, mar.
  2010.

\bibitem{FaMP11}
L.~Fagiano, M.~Milanese, and D.~Piga.
\newblock Optimization of airborne wind energy generators.
\newblock {\em International Journal of Robust and Nonlinear Controll},
  22(18):2055--2083, 2011.

\bibitem{phd_thesis_fagiano}
Lorenzo Fagiano.
\newblock {\em Control of Tethered Airfoils for High--Altitude Wind Energy
  Generation}.
\newblock PhD thesis, Politecnico di Torino, Italy, February 2009.
\newblock Available on--line:
  http://lorenzofagiano.altervista.org/docs/PhD\_thesis\_Fagiano\_Final.pdf.

\bibitem{IlHD07}
A.~Ilzh\"{o}fer, B.~Houska, and M.~Diehl.
\newblock Nonlinear {MPC} of kites under varying wind conditions for a new
  class of large-scale wind power generators.
\newblock {\em International Journal of Robust and Nonlinear Control},
  17:1590--1599, 2007.

\bibitem{Loyd80}
M.~L. Loyd.
\newblock Crosswind kite power.
\newblock {\em Journal of Energy}, 4(3):106--111, 1980.

\bibitem{Makani}
{Makani Power Inc.}
\newblock http://www.makanipower.com.

\bibitem{SkPo05}
S.~Skogestad and I.~Postlethwaite.
\newblock {\em Multivariable Feedback Control. 2$^{nd}$ edition}.
\newblock Wiley, 2005.

\bibitem{skysails}
{SkySails GmbH \& Co.}, 2010.
\newblock http://www.skysails.info.

\bibitem{TBSO11}
E.J. Terink, J.~Breukels, R.~Schmehl, and W.J. Ockels.
\newblock Flight dynamics and stability of a tethered inflatable kiteplane.
\newblock {\em AIAA Journal of Aircraft}, 48(2):503--513, 2011.

\bibitem{VeGK12}
C.~Vermillion, T.~Grunnagle, and I.~Kolmanovsky.
\newblock Modeling and control design for a prototype lighter-than-air wind
  energy system.
\newblock In {\em American Control Conference (ACC), 2012}, pages 5813 --5818,
  June 2012.

\bibitem{WiLO08}
P.~Williams, B.~Lansdorp, and W.~Ockels.
\newblock Optimal crosswind towing and power generation with tethered kites.
\newblock {\em Journal of guidance, control, and dynamics}, 31:81--93, 2008.

\end{thebibliography}

\end{document}